\documentclass[journal,onecolumn]{IEEEtran}
\usepackage{amsmath}

\usepackage{amssymb}
\usepackage{amsthm}
\usepackage{dsfont}
\usepackage{url}
\usepackage{enumerate}

\usepackage{graphicx,color,ifpdf,subfigure}

\graphicspath{{Fig/}{.}}
\ifpdf
\DeclareGraphicsExtensions{.pdf}
\else
\DeclareGraphicsExtensions{.eps}
\fi

\def\<{\langle}
\def\>{\rangle}

\def\bx{\boldsymbol{x}}

\def\bX{\boldsymbol{X}}

\def\by{\boldsymbol{y}}

\def\bu{\boldsymbol{u}}

\def\bz{\boldsymbol{z}}

\def\bmu{\boldsymbol{\mu}}
\def\bxi{\boldsymbol{\xi}}

\def\b1{\boldsymbol{1}}

\newtheorem{thm}{Theorem}[section]
\newtheorem{prop}[thm]{Proposition}
\newtheorem{lemma}{Lemma}
\newtheorem{corollary}[thm]{Corollary}

\begin{document}
\title{Isotropic Multiple Scattering Processes on Hyperspheres}
\author{Nicolas~Le~Bihan, Florent~Chatelain,
and~Jonathan~H.~Manton,~\IEEEmembership{IEEE~fellow}
\IEEEcompsocitemizethanks{
\IEEEcompsocthanksitem N. Le Bihan is with the CNRS, University of Melbourne, Australia. email: {\tt nicolas.le-bihan@gipsa-lab.grenoble-inp.fr}. His research was supported by the ERA, European Union, through the International Outgoing Fellowship (IOF GeoSToSip 326176) program of the 7th PCRD.
\protect\\
\IEEEcompsocthanksitem F. Chatelain is with the GIPSA-Lab, Department of Images and Signal, Grenoble, France. email: {\tt florent.chatelain@gipsa-lab.grenoble-inp.fr}.\protect\\
\IEEEcompsocthanksitem J.H. Manton is with the University of Melbourne, Australia. email: {\tt jmanton@unimelb.edu.au} }
}

\maketitle
\begin{abstract}

This paper presents several results about isotropic random walks and multiple scattering 
processes on hyperspheres 
${\mathbb S}^{p-1}$.
It allows one to derive the Fourier expansions on ${\mathbb S}^{p-1}$ of these processes.
A result of unimodality for the multiconvolution of symmetrical 
probability density functions (pdf) on ${\mathbb S}^{p-1}$ is also introduced.
Such processes are then studied in the case where the scattering distribution is 
von Mises Fisher (vMF). Asymptotic distributions for the multiconvolution of 
vMFs on ${\mathbb S}^{p-1}$ are obtained.  Both Fourier expansion and
asymptotic approximation allows us to compute estimation bounds for 
the parameters of Compound Cox Processes (CCP) on ${\mathbb S}^{p-1}$. 

\end{abstract}
\begin{IEEEkeywords}
Isotropic random walk on ${\mathbb S}^{p-1}$, Compound Cox Processes on ${\mathbb S}^{p-1}$, 
von Mises-Fisher distribution, Fourier series expansion on hyperspheres, multiple scattering, 
Cramer-Rao lower bounds.
\end{IEEEkeywords}
\section{Introduction}
\label{sec:intro}
Mixtures of von Mises-Fisher (vMF) distributions are models used in applications
ranging from MRI data analysis \cite{Bhalerao} to radiation therapy beam direction
clustering \cite{Bangert} and speaker clustering \cite{Tang}. The finite
mixture case was originally studied in \cite{Banerjee} for data clustering on
hyperspheres. All the above mentioned contributions made use of EM algorithms for the estimation of
mixtures weights, vMF distribution parameters or number of
mixture component. 

In this paper, we consider particular countably infinite mixtures of
directional distributions where the components of the mixture 
are multiply convolved distribution of unit vector in $\mathbb{R}^{p}$, {\em i.e.} elements of ${\mathbb S}^{p-1}$, 
and where the weights are controlled by a Cox process.

The proposed approach is valid for any dimension $p$, 
even though applications are mainly concerned with the case $p=3$. In particular, the problem of multiple
scattering for waves (or particles) in a random medium can be studied with the random processes presented
in this work. As originally introduced in \cite{Ning}, compound/mixture processes model allow the description 
of the output distribution of the direction
of propagation of the wave in terms of a mixture of symmetrical distributions on hyperspheres. 
Similar models are studied in \cite{LeBihanMargerin,Said} where multiple scattering is described as a 
Compound Poisson Process (CPP) on the rotation group $SO(3)$. In \cite{LeBihanMargerin}, it is shown 
that this model allows to describe forward multiple scattering, and its accuracy is high when the number
of  diffusion events is low. Thus, the CPP model describes the behavior
of particles in a scattering medium before the fully developed diffusive regime
(known to be thoroughly described by the Brownian motion on $SO(3)$ and
originally studied by Perrin \cite{Perrin}). 

Convolution on the hypersphere have been studied for {\em pdf} modelling  in engineering application \cite{Dokmanic2010}.
Random walk models on the sphere have been used to describe wave propagation in engineering litterature \cite{Franceschetti2004} and multiple scattering is a reccurent issue in many engineering applications such as optics \cite{Durant2007}, communications \cite{Jin2012} and antennas \cite{ghogho2001,Costa2014}. The occurence of random scatterers in a wide range of wavelengths makes the problem of multiple scattering relevent to many types of wave propagation. Being able to analyse the distribution of the multiply scattered wave/particle is thus of critical importance. The model studied in this paper aims at describing multiple scattering with a stochastic process and make use of results from harmonic analysis on spheres to predict the behaviour of multiple scattering processes. In addition, stochastic process model allows to infer on the medium the particle/wave has travelled through. In this paper, the harmonic expansions we have derived allows one to numerically evaluate the lower bounds achievable for the estimation of medium parameters.

In this paper, we extend the CPP model to the case where the counting process is no longer a homogeneous Poisson process, 
but rather a Cox process, {\em i.e.} a process with intensity being a positive random process itself. 
The family of processes considered are thus Compound Cox Processes (CCP) on ${\mathbb S}^{p-1}$. 
In contrast with \cite{LeBihanMargerin,Said}, the pdf of the random walk and the multiple scattering process
is here studied in detail for the general case of isotropic random scattering events
and when these events are von Mises Fisher distributed.
Several results about  multiconvolution, symmetry and unimodality of such pdfs on ${\mathbb S}^{p-1}$ 
are introduced and used to provide Fourier series expansion of the pdf of a multiply scattered unit vector in
$\mathbb{R}^{p}$. In the von Mises Fisher case, we provide an asymptotic distribution of the process which is 
a mixture of vMF distributions. In addition, we compute the Cramer Rao lower bounds (CRLB) for some 
parameters of the CCP model on ${\mathbb S}^{p-1}$ in the case where the counting process is 
either a Poisson process or a Cox process with distribution belonging to an exponential family
(when its intensity process is a Gamma process).

The contributions of the paper can be summarized as follows: the isotropic multiple scattering process is expressed using multiple convolution on double cosets. An unimodality theorem is given for such multi-convolution of unimodal and symmetric {\em pdf}s on ${\mathbb S}^{p-1}$.
Using harmonic analysis results on hyperspheres allows us
to obtain the expression of the Fourier coefficients (Legendre polynomial moments) of the {\em pdf}
of a $n$-step isotropic random walk on ${\mathbb S}^{p-1}$, which leads to the Fourier expansion of this {\em pdf}. 
These results are extended to the multiple scattering process
{\em pdf} when the occurence of scattering events is a Cox process. In particular, the case when each random step follows a von Mises Fisher (vMF) law is studied in detail: asymptotic approximations for vMF random walk and multiple scattering process on ${\mathbb S}^{p-1}$ are given.

The remainder of this text is outlined as follows. 
Properties of the isotropic random walk on ${\mathbb S}^{p-1}$ are given in \ref{sec:RWSP}. 
Section \ref{sec:mulscatmod} presents the Compound Cox Process (CCP) model for the study 
of multiple scattering on hyperspheres, with emphasis on the use of harmonic analysis 
on ${\mathbb S}^{p-1}$ to provide characteristic function of the distribution after a time $t$. 
Section \ref{sec:VMFRWSP} gives an approximation result for multiconvolved vMF pdfs and 
its potential use for the estimation of the CCP parameters.

%
%
\section{Random walk on ${\mathbb S}^{p-1}$}
\label{sec:RWSP}

After reviewing some known facts about hyperspheres and functions taking values on hyperspheres, 
we introduce new results for the homogeneous random walk on ${\mathbb S}^{p-1}$.

\subsection{General properties}
\label{subsec:Gene_prop}
In ${\mathbb R}^p$, the hypersphere, denoted ${\mathbb S}^{p-1}$, is  the set of $p$
dimensional vectors with unit length ${\mathbb S}^{p-1}=\{ \bx \in \mathbb{R}^{p} ; \, || \bx || = 1 \}$.
Hyperspheres (sometimes simply called spheres) are well-known compact manifolds with positive curvature.
They are homogeneous spaces of importance in Lie group theory, especially because of their relation with the rotation group $SO(p)$;
as they are the following quotients: ${\mathbb S}^{p-1}\cong SO(p)/SO(p-1)$. Hyperspheres are Riemannian symmetric spaces for
which the Riemannian distance is simply the "angle" between two elements, $d({\bf x},{\bf y})=|\arccos(\bx^T\by)|$ for $\bx,\by \in {\mathbb S}^{p-1}$.
In the sequel, we will make use of the notation $\theta_{\bx,\by}$ for the distance between $\bx$ and $\by$, which is comprised between
$0$ and $\pi$, to avoid any ambiguity.
Also, we will use the notation $\bmu^T \bx = \cos\theta_{\bx,\bmu}$ as $\bx$ and $\bmu$ are unit vectors in ${\mathbb R}^p$. Finally, 
this allows us to define the tangent-normal decomposition of the random unit vector $\bx$:
\begin{align}
\label{eq:tangentnorm}
 \bx &= t \bmu + \sqrt{1-t^2} \bxi,
\end{align}
where $t=\bmu^T \bx$, and $\bxi$ belongs to the intersection of ${\mathbb S}^{p-1}$ with
the hyperplane through the origin normal to $\bmu$, denoted as $\bmu^{\perp} \cap {\mathbb S}^{p-1}$,
which equals ${\mathbb S}^{p-2}$.

\subsection{Mathematical problem statement}
\label{sec:math}
We consider the problem of modelling the distribution of the isotropic multiple scattering process on the sphere ${\mathbb S}^{p-1}$ in ${\mathbb R}^p$.
Each isotropic scattering event acts as a random rotation $R \in SO(p)$ on the direction of propagation $\bx \in {\mathbb S}^{p-1}$. 
The direction after $k\ge 1$ scattering events reads
\begin{align*}
 \bx_k &= R_k \bx_{k-1}= R_k \ldots R_1 \bx_0,
\end{align*}
where $R_l$ is the rotation matrix associated with the $l$th scattering event.
Let $\bxi_k$ be the direction of the normal part of $\bx_k$ with respect to $\bx_{k-1}$ as defined in (1). 
The isotropy assumption involves that $\bxi_k$  is uniformly distributed on 
$\bx_{k-1}^{\perp} \cap {\mathbb S}^{p-1}$.
Moreover, the scattering events are assumed to be mutually independent and independent of the initial direction
$\bx_0$. This involves the first order Markov property on the chain of random directions
$\bx_0$, $\bx_1$, $\ldots$ $\bx_n$ in ${\mathbb S}^{p-1}$:  $\bx_k$ given 
$\bx_{k-1}$ is independent of $\bx_l$ for $0 \le l \le k-2$.

The multiple scattering process considered in this paper is governed by the distribution of $\bx_t \equiv \bx_{N(t)} \in {\mathbb S}^{p-1}$ where the random number $N(t)$ of scatterers/rotations after a time $t$
is driven by a Poisson or more generally a Cox process.

\subsection{Symmetrical pdfs on ${\mathbb S}^{p-1}$}
\label{subsec:characf}
In this paper, we will consider probability density functions (\emph{pdf}s) $f$ of the direction vector $\bx$. These
\emph{pdf}s are elements from
$L^1({\mathbb S}^{p-1},{\mathbb R})$ with the following additional constraints: positivity and 
$\int_{{\mathbb S}^{p-1}}f({\bx})d{\bx}=1$.
In particular, we will be concerned with \emph{pdf}s that will only depend on the angular variable
(the Riemannian distance introduced in \ref{subsec:Gene_prop}).
An example of such \emph{pdf} is the von Mises Fisher \cite{Mardia} distribution that will be considered later in the paper.
Before moving to this specific case, we consider general symmetrical \emph{pdf}s. In this case,
the \emph{pdf} $f(\bx ; \bmu)$ is only a function of the cosine $\bx^T \bmu$, i.e. for all
$\bx \in {\mathbb S}^{p-1}$
\begin{align}
\label{eq:pdfsym}
f(\bx ; \bmu)= g(\bmu^T \bx),
\end{align}
where the unit vector parameter
$ \bmu \in {\mathbb S}^{p-1}$ characterizes the rotational axis \cite[p. 179]{Mardia}.
Without loss of generality, we can assume in the sequel that $\bmu$ is oriented such that $E[ \bmu^T \bx ] \ge 0$.
Consider the tangent-normal decomposition $\bx = t \bmu + \sqrt{1-t^2} \bxi$
defined in \eqref{eq:tangentnorm}. 
The rotational symmetry constraint \eqref{eq:pdfsym} directly implies that  $\bxi$ is independent
of $t$ and uniformly distributed on the space $\bmu^{\perp} \cap {\mathbb S}^{p-1}$.
For symmetry reason, the first moment of area of this space
is the null vector. Thus $E[\bxi]=\boldsymbol{0}$, and the mean of $\bx$ expresses as
\begin{align}
\label{eq:MomStep}
  E[ \bx ] &=  \rho \bmu,
\end{align}
where the scalar $\rho=E[\bmu^T \bx]= ||E[\bx]|| \in [0,1]$ is called the  {\it mean resultant
length} \cite[p. 164]{Mardia}.
When $\rho>0$, the {\it mean direction} is
uniquely defined as the rotational axis vector $ \bmu \in {\mathbb S}^{p-1}$.
Note that the mean resultant
length $\rho$ is directly linked with the dispersion of the directional distribution.
A value of $\rho$ close to $1$ indicates a high concentration about the mean direction.

Based now on the higher moments of directional statistics, harmonic analysis on spheres provides us a way to derive
a characteristic function for \emph{pdfs} taking values on ${\mathbb S}^{p-1}$. In the symmetrical case
(also known as the \emph{zonal} case), the characteristic function ({\em i.e.}
the Fourier transform of $f$) takes a simple form \cite{Kent78,Volker2013} as
an harmonic basis consists of the Legendre polynomials.
Given a \emph{pdf} $f \in L^1({\mathbb S}^{p-1},{\mathbb R})$ that is symmetrical about $\bmu \in {\mathbb S}^{p-1}$,
its characteristic function, denoted $\widehat{f}_{\ell}$, for $\ell \ge 0$, is given by:
\begin{align}
\widehat{f}_{\ell} & = E\left[ P_{\ell}(\cos\theta_{\bx,\bmu})\right], \\
& = E\left[ P_{\ell}(\bmu^T \bx )\right],\\
 & =\int_{{\mathbb S}^{p-1}} f( \bx ; \bmu ) P_{\ell}( \bmu^T \bx ) d\bx
\label{eq:charact_func_pdf_S}
\end{align}
where $P_{\ell}( \bmu^T \bx )$ are the Legendre polynomials of order $\ell$ in dimensions $p$
taken at $\bx$ with respect to $\bmu$,
the symmetry axis of $f$. 
Note that the Legendre polynomials $P_{\ell}(t )$ in dimensions $p$ are the same as the ultraspherical or 
Gegenbauer polynomials $C_{\ell}^{(p-2)/2}(t)$ \cite[pp. 771--802]{Abramowitz} renormalized such that $P_{\ell}(1)=1$. This yields
\begin{align*}
 P_{\ell}( t ) &=
 \Big[ \underbrace{\frac{\Gamma(\ell+p-2)}{\ell! \Gamma(p-2)}}_{C_{\ell}^{(p-2)/2}(1)} \Big]^{-1} C_{\ell}^{(p-2)/2}(t)  ,
\end{align*}
for all $t\in[-1,1]$, $\ell\ge 0$. This normalization ensures that $|P_{\ell}(t )|\le 1$ for all
$|t| \le 1$, thus $| \widehat{f}_{\ell}| \le 1$.
This family of polynomials
forms an orthogonal basis on the Hilbert space of square-integrable functions
on ${\mathbb S}^{p-1}$ that are rotationally symmetric about $\bmu$:
\begin{align*}
 \langle P_{\ell} , P_{m} \rangle &= \int_{{\mathbb S}^{p-1}} P_{\ell}(\bmu^T\bx)P_{m}(\bmu^T\bx) d\bx =
 c_{p,\ell}^{-1} \delta_{\ell,m},
\end{align*}
where $\delta_{\ell,m}=1$ if $\ell=m$, and $0$ otherwise. The normalizing constants read
\begin{align}
c_{p,\ell}
&= \frac{1}{\omega_{q-1}}
\frac{(2 \ell + p - 2 ) \Gamma(\ell + p -2 ) }{ \ell !  \Gamma(p-1) },
\label{eq:normalcst}
\end{align}
for all $\ell \ge 0$ with $\omega_{q-1}= 2 \frac{\pi^{p/2}}{\Gamma(p/2)}$ the area of the $(p-1)$-dimensional sphere
${\mathbb S}^{p-1}$.
Moreover, the Fourier expansion of any $p$-dimensional rotationally symmetric and continuous
\emph{pdf} $f(\cdot;\bmu)$ can be written:
\begin{align}
f(\bx ;\bmu) = \sum_{\ell \geq 0} c_{p,\ell} \widehat{f}_{\ell}P_{\ell}( \bmu^T \bx),
\label{eq:fourierexp}
\end{align}
for all $\bx \in {\mathbb S}^{p-1}$.
Note that the Fourier coefficients $\widehat{f}_{\ell}$, also called {\em Legendre polynomial moments},
are scalar valued and that is a consequence of the symmetry assumption.

\subsection{Convolution of symmetrical \emph{pdf}s on ${\mathbb S}^{p-1}$}
\label{sec:coset}
As already mentioned, we will focus on \emph{pdf}s which are symmetrical and thus only depend on one angular variable.
In particular, this will come from the fact that we will consider isotropic random walks on ${\mathbb S}^{p-1}$. As we will also assume the random steps to be independent, we will end up considering multiple convolution of their associated density. Following \cite{Dym1972}, we provide here a way to handle the convolution of symmetrical functions in the framework of integrable functions over double cosets.
First, recall that given a group $G$ and two subgroups of $G$, denoted $H$ and $K$, a double coset in $G$ is an equivalence class defined by the equivalence relation $x \sim y$ \emph{iff} there exists $h \in H$ and $k \in K$ such that:
\begin{equation}
hxk=y.
\end{equation}
Given $g\in G$, the double coset $H g K=\left\{ hgk \mid h \in H,\, k \in K\right\}$ is therefore the orbit of the
group action of $H \times K$ on $g$, where $H$ acts by left multiplication and $K$ acts by right multiplication.
The set of double cosets denoted as $H\backslash G/K$ contains all the orbits of the group action of $H \times K$ on $G$.
Here, we will consider the case where $G=SO(p)$ and $H=K=SO(p-1)$, \emph{i.e.} the double cosets in $SO(p-1)\backslash SO(p)/SO(p-1)$. As explained in \cite{Dym1972}, this set of double cosets can be parametrized
using the colatitude measured with respect to the axis left invariant by the $SO(p-1)$ subgroup.
The space $L^1(SO(p-1)\backslash SO(p)/SO(p-1), {\mathbb R})$ is the space of functions in $L^1(SO(p), {\mathbb R})$ that are
invariant on double cosets $KgK$ for $g \in SO(p)$ and $K\cong SO(p-1)$.

Thus, such functions can be thought as function $g(\bx^T \bmu)= g(\cos \theta_{\bx})$ of the (co)latitude of the
$(p-2)$-dimensional sphere defined by $\cos \theta_{\bx}= \bx^T \bmu$ where $\bmu \in {\mathbb S}^{p-1}$
is the axis left invariant by the $SO(p-1)$ rotation subgroup (to be chosen arbitrarily)%
\footnote{A function  $h \in L^1(SO(p-1)\backslash SO(p)/SO(p-1), {\mathbb R})$ is defined on the rotation 
group $SO(p)$. However, this function depends only on the cosine between the axis left invariant by the $SO(p-1)$ subgroup 
and its rotated image. For any $R \in SO(p)$, such a function can be expressed as  
$h(R)= f(\bx) = g(\bmu^T \bx)$ with $\bx= R \bmu$,
where the functions $f$ and $g$ are defined on $\mathbb{S}^{p-1}$ and  $[-1,1]$ respectively. 
This shows that 
a function in  $L^1(SO(p-1)\backslash SO(p)/SO(p-1), {\mathbb R})$ can be identified as a function on $\mathbb{S}^{p-1}$ or 
a function of the cosine $\bmu^T \bx$ defined on  $[-1,1]$. By abuse of notation, 
we will confuse these functions in the remainder.}.
This underlines that symmetrical {\em pdf}s about $\bmu$ belong to this double coset space according to eq. \eqref{eq:pdfsym}.

The convolution product in $L^1(SO(p-1)\backslash SO(p)/ SO(p-1), {\mathbb R})$ is inherited from the convolution
product in $L^1(SO(p),{\mathbb R})$ and reads:
\begin{align}
\label{eq:conv}
 \left( f \star_{\bmu} g \right) ({\bx}) & =
 \int_{{\mathbb S}^{p-1}} f({\bx}^T{\by})g({\by}^T\bmu) d\by,
\end{align}
for all $\bx \in {\mathbb S}^{p-1}$, where $\bmu$ is again the axis left invariant by the
rotation subgroup $SO(p-1)$.
The notation $\star_{\bmu}$ is used to recall that the (co)latitude is measured with respect to the axis $\bmu$
left invariant by $SO(p-1)$.

\begin{prop}[Convolution]
\label{prop:conv}
Let $f,g \in
 L^1(SO(p-1)\backslash SO(p)/ SO(p-1), {\mathbb R})$,
with $\bmu$ standing for the axis left invariant by the rotation subgroup $SO(p-1)$.
The following properties hold for the convolution in the double coset space.
\begin{enumerate}[i)]
 \item Stability.  $f \star_{\bmu} g \in
 L^1(SO(p-1)\backslash SO(p)/ SO(p-1), {\mathbb R})$, that is  $f \star_{\bmu} g$ is
 a function of the only cosine $\bmu^T {\bx}$:
 \begin{align}
 \label{eq:stab}
  \left( f \star_{\bmu} g \right) ({\bx}) & \equiv \left(f \star_{\bmu} g\right) (\bmu^T {\bx}),
 \end{align}
 \item Commutativity.
  \begin{align}
 \label{eq:commut}
  \left( f \star_{\bmu} g \right) ({\bx}) & = \left( g \star_{\bmu} f \right) ({\bx}),
 \end{align}
  \item Fourier product. If $\widehat{f}_{\ell}$ and $\widehat{g}_{\ell}$ are the respective  $\ell$th-order
  Fourier coefficients of $f$ and $g$,  the Fourier coefficients of their convolution product is given by:
\begin{align}
\label{eq:FourierConv}
\widehat{\left(f \star_{\bmu} g\right)}_{\ell}& =\widehat{f}_{\ell}\,\widehat{g}_{\ell},
\end{align}
for all $\ell \ge 0$.
\end{enumerate}
\end{prop}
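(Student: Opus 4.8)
The plan is to establish the three properties in the order (i), (iii), (ii), since commutativity follows most transparently once the Fourier factorization is available. \emph{Stability.} I would first check that $f\star_\bmu g$ is genuinely a function of $\bmu^T\bx$ alone. Take any $R\in SO(p)$ fixing the axis, $R\bmu=\bmu$ (equivalently $R^T\bmu=\bmu$). Substituting $\by = R\bz$ in \eqref{eq:conv} and using that the surface measure is $SO(p)$-invariant together with $(R\bx)^T(R\bz)=\bx^T\bz$ and $(R\bz)^T\bmu = \bz^T\bmu$, one gets $(f\star_\bmu g)(R\bx)=(f\star_\bmu g)(\bx)$. Since the orbits of the stabilizer $SO(p-1)$ of $\bmu$ acting on ${\mathbb S}^{p-1}$ are exactly the latitude sets $\{\bx:\bmu^T\bx=\mathrm{const}\}$, invariance under all such $R$ means $f\star_\bmu g$ depends on $\bx$ only through $\bmu^T\bx$, which is \eqref{eq:stab}. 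Membership in $L^1$ is then the standard convolution bound $\|f\star_\bmu g\|_1\le\|f\|_1\,\|g\|_1<\infty$.

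\emph{Fourier product.} Using \eqref{eq:charact_func_pdf_S} I write $\widehat{(f\star_\bmu g)}_\ell=\int_{{\mathbb S}^{p-1}}(f\star_\bmu g)(\bx)\,P_\ell(\bmu^T\bx)\,d\bx$, insert the definition \eqref{eq:conv}, and exchange the two integrals (Fubini is licit because $f,g\in L^1$ and $|P_\ell|\le 1$). This leaves the inner integral $I(\by)=\int_{{\mathbb S}^{p-1}}f(\bx^T\by)\,P_\ell(\bmu^T\bx)\,d\bx$ to evaluate. The decisive point is that the integral operator with kernel depending only on the inner product $\bx^T\by$ commutes with the $SO(p)$-action; since $L^2({\mathbb S}^{p-1})$ splits into the pairwise inequivalent irreducible spaces of degree-$\ell$ spherical harmonics, Schur's lemma forces this operator to act as a scalar on each such space --- this is exactly the Funk--Hecke theorem. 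As $P_\ell(\bmu^T\cdot)$ is a zonal harmonic of degree $\ell$, I obtain $I(\by)=\lambda_\ell\,P_\ell(\bmu^T\by)$, and evaluating at $\by=\bmu$ with $P_\ell(1)=1$ pins down the eigenvalue as $\lambda_\ell=\int_{{\mathbb S}^{p-1}}f(\bmu^T\bx)\,P_\ell(\bmu^T\bx)\,d\bx=\widehat{f}_\ell$. Putting the pieces back together and recognising the definition of $\widehat{g}_\ell$ gives $\widehat{(f\star_\bmu g)}_\ell=\widehat{f}_\ell\,\widehat{g}_\ell$, i.e. \eqref{eq:FourierConv}.

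\emph{Commutativity.} I would deduce \eqref{eq:commut} from the factorization just proved instead of arguing on double cosets: for every $\ell\ge 0$ the scalar product commutes, so $\widehat{(f\star_\bmu g)}_\ell=\widehat{f}_\ell\widehat{g}_\ell=\widehat{g}_\ell\widehat{f}_\ell=\widehat{(g\star_\bmu f)}_\ell$. Both $f\star_\bmu g$ and $g\star_\bmu f$ are functions of $\bmu^T\bx$ by stability and have identical Legendre moments, so the completeness of the expansion \eqref{eq:fourierexp} forces them to agree.

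I expect the Fourier-product step to be the crux. The honest work there is to invoke the decomposition of $L^2({\mathbb S}^{p-1})$ into irreducible $SO(p)$-modules and to verify that the Funk--Hecke eigenvalue is normalised to coincide exactly with $\widehat{f}_\ell$ as defined in \eqref{eq:charact_func_pdf_S}; the evaluation at the pole $\by=\bmu$ is what makes this matching automatic. Everything else --- the Fubini exchange and the $L^1$ bound --- is routine bookkeeping.
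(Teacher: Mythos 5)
Your proof is correct, and for the substantive part --- property (iii) --- it takes the same route as the paper, which simply invokes the Funk--Hecke theorem; your version usefully makes explicit how the eigenvalue is identified with $\widehat{f}_\ell$ by evaluating the zonal output at the pole $\by=\bmu$ and using $P_\ell(1)=1$, so no explicit normalisation constant is needed. Where you diverge is in (i) and (ii): the paper gives no argument of its own and defers to Dym--McKean (for $p=3$, ``naturally extended''), whereas you prove stability directly from invariance of $f\star_{\bmu}g$ under the stabilizer $SO(p-1)$ of $\bmu$ together with the fact that its orbits are the latitude sets, and you obtain commutativity as a corollary of \eqref{eq:FourierConv} plus the fact that a zonal $L^1$ function is determined by its Legendre moments, rather than by a direct change of variables on the group. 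The latter is a genuinely different and arguably cleaner derivation, but note one small point of hygiene: the injectivity you need is for $L^1$ zonal functions, while the expansion \eqref{eq:fourierexp} is stated in the paper only for continuous densities; the required determinacy does hold (if all moments against $P_\ell$ vanish then, by Weierstrass density of polynomials in $C([-1,1])$, the function vanishes a.e.), and it would be worth saying this in one line so that the reduction of (ii) to (iii) does not silently assume continuity. The Fubini exchange and the Young-type bound $\|f\star_{\bmu}g\|_1\le\|f\|_1\|g\|_1$ are fine as stated, the latter because $\int_{{\mathbb S}^{p-1}}|f(\bx^T\by)|\,d\bx$ is independent of $\by$ by rotational invariance of the surface measure.
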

\begin{proof}
Some proofs for the two first properties i) and ii) are given in  \cite[p. 237--239]{Dym1972} for $p=3$; they naturally extend
to the general case $p\ge 2$. Property iii) is a direct consequence of the Funk-Hecke theorem \cite[Theorem 7.8, p. 188]{Volker2013}.
\end{proof}
These properties illustrate that the convolution product and
its Fourier expansion behave nicely in this space. In particular, property iii) of Prop. \ref{prop:conv}
allows us to obtain the convolution theorem.
Finally, some well-known results on symmetry and unimodality in the real line can be extended
to the  convolution on the hypersphere.
\begin{thm}[Unimodality of the convolution product]
\label{thm:uni}
Let $f,g \in L^1(SO(p-1)\backslash SO(p)/ SO(p-1), {\mathbb R})$ be the {\it pdf}s of two absolutely continuous
unimodal and rotationally symmetric distributions on ${\mathbb S}^{p-1}$ with the same mode which
necessarily equals their mean direction $\bmu \in {\mathbb S}^{p-1}$.
Then the convolved distribution $f \star_{\bmu} g$, which is rotationally symmetric about $\bmu$ or equivalently belongs to
$L^1(SO(p-1)\backslash SO(p)/ SO(p-1), {\mathbb R})$
according to Prop. \ref{prop:conv}, is also unimodal with mode $\bmu$.
\end{thm}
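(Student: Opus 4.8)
The plan is to translate the geometric notion of unimodality into a one–dimensional monotonicity statement, decompose both densities into elementary ``uniform cap'' densities, and thereby reduce the whole theorem to a purely geometric monotonicity property of the area of intersection of two spherical caps.

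Since $f$ and $g$ are rotationally symmetric about $\bmu$, I identify them with functions $t\mapsto f(t)$, $t\mapsto g(t)$ of the cosine $t=\bmu^T\bx\in[-1,1]$, as licensed by the double-coset identification. Unimodality with mode $\bmu$ (i.e.\ at $\theta_{\bx,\bmu}=0$, $t=1$) means precisely that these functions are non-decreasing in $t$, equivalently that every superlevel set is a cap $\{\bx:\bmu^T\bx\ge c\}$. By Proposition~\ref{prop:conv}(i) the convolution depends only on $s=\bmu^T\bx$, so the theorem is equivalent to the claim that $s\mapsto(f\star_{\bmu}g)(s)$ is non-decreasing in $s$. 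For $c\in[-1,1]$ let $u_c$ denote the uniform density on the cap $\{t\ge c\}$, so $u_c(t)\propto\indic[t\ge c]$; each $u_c$ is itself unimodal with mode $\bmu$. A non-decreasing density $f$ admits the ``layer-cake'' representation $f=\int_{[-1,1]}u_c\,d\nu_f(c)$, where $\nu_f$ is the Stieltjes measure $d\nu_f(c)=|C_c|\,df(c)$ on $(-1,1]$ ($|C_c|$ the cap area), together with an atom at $c=-1$ carrying the uniform-on-sphere component; $\nu_f\ge0$ because $f$ is non-decreasing, and Fubini forces $\nu_f$ to be a probability measure since $f$ and each $u_c$ integrate to $1$.

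Writing the analogous $g=\int u_{c'}\,d\nu_g(c')$ and using that $\star_{\bmu}$ is linear in each argument (it is the integral \eqref{eq:conv}), I obtain
\begin{align*}
 f\star_{\bmu}g = \int\!\!\int u_c\star_{\bmu}u_{c'}\;d\nu_f(c)\,d\nu_g(c').
\end{align*}
A non-negative mixture of functions that are non-decreasing in $t$ is again non-decreasing in $t$, so it suffices to treat the elementary case $f=u_c$, $g=u_{c'}$. Here the convolution has a direct geometric meaning:
\begin{align*}
 (u_c\star_{\bmu}u_{c'})(\bx)\propto\int_{{\mathbb S}^{p-1}}\indic[\bx^T\by\ge c]\,\indic[\bmu^T\by\ge c']\,d\by = \mathrm{area}\big(C_c(\bx)\cap C_{c'}(\bmu)\big),
\end{align*}
the area of intersection of the cap of radius $\arccos c$ about $\bx$ with the cap of radius $\arccos c'$ about $\bmu$, whose centres are at angular distance $\theta=\arccos s$. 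Thus everything reduces to the key geometric lemma: \emph{the intersection area of two caps of fixed radii is non-increasing in the angular distance $\theta$ between their centres}, i.e.\ non-decreasing in $s=\cos\theta$, with mode at $\theta=0$.

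To prove this lemma I would set $\bmu=e_1$, $\bx=\cos\theta\,e_1+\sin\theta\,e_2$, write $\by=(y_1,y_2,\by')$, and integrate out $\by'\in{\mathbb R}^{p-2}$ to reduce the area to a planar integral $I(\theta)=\kappa\int_{D(\theta)}w(y_1,y_2)\,dy_1\,dy_2$ over $D(\theta)=\{y_1\ge c\}\cap\{\cos\theta\,y_1+\sin\theta\,y_2\ge c'\}\cap\{y_1^2+y_2^2\le1\}$, with the \emph{radially symmetric} weight $w=(1-y_1^2-y_2^2)^{(p-4)/2}$. Only the second half-plane depends on $\theta$; its boundary line remains tangent to the circle of radius $|c'|$ about the origin and merely rotates as $\theta$ grows. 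Parametrising that line by signed arclength $\tau$ from its foot of perpendicular gives $|\by(\tau)|^2=c'^2+\tau^2$, so $w$ restricted to the line is an \emph{even} function of $\tau$, while the rotation imparts outward normal velocity $V_n=\tau$ at parameter $\tau$. The transport (flux) formula then yields $\tfrac{dI}{d\theta}\propto\int\tau\,w(\tau)\,d\tau$ over the $\tau$-segment lying in the fixed disk and the fixed half-plane $\{y_1\ge c\}$. As $\tau w(\tau)$ is odd, the part of the segment cut out by the disk alone (a $\tau$-symmetric interval) contributes nothing, and the net contribution comes only from the asymmetric truncation $\{y_1\ge c\}$; a short endpoint check shows this remaining integral is $\le0$ in every case, so $I$ is non-increasing in $\theta$. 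The main obstacle is isolated in this final sign computation, and the evenness of $w$ along the rotating line — valid in every dimension $p\ge2$ — is exactly what makes the sign definite and closes the argument.
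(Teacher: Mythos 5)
Your proof is correct in its essential structure, but it follows a genuinely different route from the paper's. The paper (Appendix \ref{app:uni}) performs the layer--cake decomposition on \emph{one} factor only, writing $f\star_{\bmu}g$ as $\int_0^{\infty}\Pr\bigl(g_2(\bX_1^T\bx)\ge u\bigr)\,du$, and thereby reduces the theorem to Lemma \ref{lemma1}: the convolution of a single cap indicator with a general unimodal symmetric density is monotone in $\bmu^T\bx$. That lemma is proved by a reflection argument --- the isometry $R$ swapping $\bx_t$ and $\bx_{t'}$ pairs points $\bx_1\leftrightarrow R\bx_1$ of the two symmetric-difference regions, and Lemma \ref{lemma2} plus the monotonicity of $g_1$ gives the sign. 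You instead decompose \emph{both} factors into uniform cap densities and push the problem all the way down to the purely geometric statement that the intersection area of two spherical caps of fixed radii is non-increasing in the angular distance between their centres, which you prove by a first-variation (flux) computation exploiting the evenness of the radial weight $(1-y_1^2-y_2^2)^{(p-4)/2}$ along the rotating chord; I checked the endpoint sign analysis and it is correct (the half-plane $\{y_1\ge c\}$ always truncates the $\tau>0$ end where $V_n=\tau>0$, so the odd integrand over the asymmetric interval $[-a,\min(a,\tau_0)]$ is $\le 0$). What your route buys is the isolation of a clean, density-free geometric lemma and a fully symmetric treatment of the two factors; what the paper's route buys is the avoidance of any differentiation under the integral sign and its attendant regularity issues --- your weight is singular at the disk boundary when $p=3$, the two-coordinate projection degenerates entirely when $p=2$ (that case needs the trivial separate argument about intersecting arcs), and the transport formula requires the chord integral $\int(1-c'^2-\tau^2)^{(p-4)/2}d\tau$ to converge, which holds only for $p\ge 3$. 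These are details to be filled rather than gaps, and both arguments ultimately deliver the same (non-strict) monotonicity of $(f\star_{\bmu}g)$ in $\bmu^T\bx$, hence unimodality with mode $\bmu$.
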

\begin{proof} See Appendix \ref{app:uni}.
\end{proof}

\subsection{Random walk on ${\mathbb S}^{p-1}$: directional distribution}
\label{sec:RW}

As explained in \ref{sec:math}, the chain of random vectors
$\bx_0$, $\bx_1$, $\ldots$ $\bx_n$ in ${\mathbb S}^{p-1}$ obeys the Markov property
since given all the past directions $\bx_{0},\ldots,\bx_{k-1}$, the
current direction $\bx_k$
depends only on the previous one $\bx_{k-1}$. As a consequence, each random step
$\bx_{k-1} \rightarrow \bx_k$, for all $k\ge 1$,
are independent. This defines a discrete time random walk on the hypersphere ${\mathbb S}^{p-1}$.
It is of note that these steps are not necessarily identically
distributed.
However, an important case appears when they are isotropic, so that all the step
directions are equiprobable. Thus, the distribution of the $k$th-step direction
$\bx_k$ is rotationally symmetric about the previous one $\bx_{k-1}$, for all
$k\ge 1$. According to \eqref{eq:pdfsym},
its conditional {\em pdf} expresses as
\begin{align}
\label{eq:stepSym}
 f( \bx_k | \bx_{k-1} ) &= g_{k,k-1}(\bx_{k-1}^T \bx_k),
\end{align}
for all $k \ge 1$.
In the remainder, we assume that the initial direction $\bx_0$ is fixed to a deterministic direction $\bmu$
and that the walk is isotropic, which means that the distributions that govern each step express as \eqref{eq:stepSym}.
The following proposition states the link between the directional {\em pdf} after $n$ steps of the
isotropic random walk on
${\mathbb S}^{p-1}$ with the convolution on the double coset $L^1(SO(p-1)\backslash SO(p)/ SO(p-1), {\mathbb R})$ introduced in
section \ref{sec:coset}.

\begin{thm} 
\label{prop:multiconvpdfsp}
Given an isotropic $n$-step random walk on ${\mathbb S}^{p-1}$, the \emph{pdf} of
$\bx_n \in {\mathbb S}^{p-1}$ is the $n$-fold convolution
in $L^1(SO(p-1)\backslash SO(p)/ SO(p-1), {\mathbb R})$, where $SO(p-1)$ is the rotation subgroup such that
$\bmu$ is left invariant. It reads
\begin{align}
\label{eq:multiconv}
    f(\bx_n ; \bmu) = \left(g_{n,n-1} \star_{\bmu} \cdots \star_{\bmu} g_{1,0}\right) \  (\bx_n )
\end{align}
where $f(\bx_{k} | \bx_{k-1}) = g_{k,k-1}(\bx_{k}^T\bx_{k-1})$
can be identified as the conditional {\em pdf} of $\bx_k$ given $\bx_{k-1}$.
\end{thm}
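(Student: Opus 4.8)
The plan is to argue by induction on the number of steps $n$, using the Chapman--Kolmogorov relation for the Markov chain together with the stability property of Proposition \ref{prop:conv}. For the base case $n=1$, since $\bx_0$ is fixed to $\bmu$, the pdf of $\bx_1$ is by definition the conditional density $g_{1,0}(\bmu^T\bx_1)$, which is precisely the stated one-fold expression and is already a function of $\bmu^T\bx_1$, i.e. symmetric about $\bmu$.

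For the inductive step I would assume that the marginal pdf of $\bx_{n-1}$ equals $h_{n-1} := g_{n-1,n-2}\star_{\bmu}\cdots\star_{\bmu}g_{1,0}$, which by the stability property (Prop. \ref{prop:conv} i)) is a function of $\bmu^T\bx_{n-1}$ alone, hence symmetric about $\bmu$; this is what makes the operator $\star_{\bmu}$ applicable at the next stage and keeps the induction well founded. By the first-order Markov property, the law of $\bx_n$ is obtained by marginalizing the intermediate direction $\bx_{n-1}$:
\[
f(\bx_n;\bmu)=\int_{{\mathbb S}^{p-1}} f(\bx_n\mid\bx_{n-1})\,h_{n-1}(\bmu^T\bx_{n-1})\,d\bx_{n-1}=\int_{{\mathbb S}^{p-1}} g_{n,n-1}(\bx_n^T\bx_{n-1})\,h_{n-1}(\bmu^T\bx_{n-1})\,d\bx_{n-1}.
\]
The last integral is, by the very definition \eqref{eq:conv} of $\star_{\bmu}$ with integration variable $\by=\bx_{n-1}$, exactly $\bigl(g_{n,n-1}\star_{\bmu}h_{n-1}\bigr)(\bx_n)$, which completes the induction and yields \eqref{eq:multiconv}.

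The step that deserves the most care --- and the main conceptual obstacle --- is the interplay between two different symmetry axes. The one-step kernel $g_{n,n-1}(\bx_n^T\bx_{n-1})$ is rotationally symmetric about the \emph{random} previous direction $\bx_{n-1}$, whereas the operator $\star_{\bmu}$ is defined relative to the \emph{fixed} axis $\bmu$. The definition \eqref{eq:conv} is precisely engineered to reconcile these: its first factor $f(\bx^T\by)$ reads off the cosine between the current point and the integration variable (the previous position), so it correctly absorbs a density symmetric about that previous point, while its second factor $g(\by^T\bmu)$ carries the accumulated density symmetric about $\bmu$. The double-coset formalism of Section \ref{sec:coset} is exactly what guarantees this is well defined and that the output is again a function of $\bmu^T\bx_n$ only.

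Finally, to write the $n$-fold product without parentheses as in \eqref{eq:multiconv}, I would invoke associativity of $\star_{\bmu}$, which is inherited from the associativity of convolution on $SO(p)$ or, equivalently, follows at once from the Fourier product rule (Prop. \ref{prop:conv} iii)), since the Legendre moments simply multiply and scalar multiplication is associative. Strictly speaking the induction above produces the right-nested bracketing, so associativity is only needed to match the flat notation of the statement.
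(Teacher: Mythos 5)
Your proposal is correct and follows essentially the same route as the paper: induction on $n$, with the base case from the deterministic initial direction, the inductive step via marginalization over $\bx_{n-1}$ identified with the double-coset convolution \eqref{eq:conv}, symmetry preserved by Prop.~\ref{prop:conv}~i), and associativity to flatten the bracketing. Your explicit discussion of the two symmetry axes (the random previous direction versus the fixed axis $\bmu$) is a helpful clarification of a point the paper leaves implicit, but it does not change the argument.
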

\begin{proof}
The proof is conducted by induction. Since $\bx_0= \bmu$ is a  deterministic vector, the
{\em pdf} of $\bx_1$ is  $f(\bx_1 ; \bmu)=  g_{1,0}(\bmu^T\bx_1)$ and belongs to
$L^1(SO(p-1)\backslash SO(p)/ SO(p-1), {\mathbb R})$. Thus the base case holds for $n=1$.
Assume now that the pdf of $\bx_{k-1}$, for $k > 1$,  is symmetrical with
respect to $\bmu$, \emph{i.e.} $f(\bx_{k-1} ; \bmu )=g_{k-1}(\bmu^T \bx_{k-1})
\in L^1(SO(p-1)\backslash SO(p)/ SO(p-1), {\mathbb R})$,
and is given by the following $(k-1)$-fold convolution:
$
f_{k-1}(\bx_{k-1} ; \bmu) =  \left(g_{k-1,k-2} \star_{\bmu} \cdots \star_{\bmu} g_{1,0}\right) \  (\bx_{k-1}).
$

Due to the isotropic assumption, the conditional {\em pdf}  $f(\bx_{k} | \bx_{k-1})=  g_{k,k-1}(\bx_{k}^T\bx_{k-1})
$ also belongs to $L^1(SO(p-1)\backslash SO(p)/ SO(p-1), {\mathbb R})$. Moreover,
this conditional {\em pdf} allows us to express the density of $\bx_k$ as
\begin{align*}
f(\bx_{k} ; \bmu) &= \int_{{\mathbb S}^{p-1}} \hspace{-3mm} f(\bx_{k} | \bx_{k-1}) f(\bx_{k-1} ; \bmu)d \bx_{k-1},\\
&=  \int_{{\mathbb S}^{p-1}} \hspace{-3mm} g_{k,k-1}(\bx_{k}^T\bx_{k-1}) g_{k-1}(\bmu^T\bx_{k-1} ; )d \bx_{k-1}.
\end{align*}
According to \eqref{eq:conv}, we recognize the following convolution on the double coset:
$f(\bx_{k} ; \bmu) = ( g_{k,k-1} \star_{\bmu} g_{k-1} ) (\bx_{k})$. Thus  $f(\bx_{k} ; \bmu)$  is also symmetrical
about $\bmu$ according to property i) of Prop. \ref{prop:conv}. As $g_{k-1}$ is assumed to be a
$k-1$-fold convolution, it comes finally by associativity  that
$f(\bx_{k} ; \bmu)=\left(g_{k,k-1} \star_{\bmu} \cdots \star_{\bmu} g_{1,0}\right) \  (\bx_{k})$,
and the inductive step holds.
\end{proof}

A direct consequence of Theorem \ref{prop:multiconvpdfsp} is that
the distribution of $n$-step random walk is rotationally symmetrical about
the initial direction $\bmu$ since  $f(\bx_n ; \bmu) \in L^1(SO(p-1)\backslash SO(p)/ SO(p-1), {\mathbb R})$.
Furthermore, it allows us to express  the characteristic function of the random walk based on the Legendre polynomial
moments of each step.
\begin{corollary}[Mean and Fourier Coefficient of the isotropic random walk]
\label{cor:FourierCoeff}
For all $n\ge 1$, the mean of the $n$-step direction $\bx_n \in {\mathbb S}^{p-1}$ expresses as
\begin{align}
 E[ \bx_n ]= \left( \prod_{k=1}^n \rho_{k,k-1} \right)  \bmu,
 \label{eq:meann}
\end{align}
where  $\rho_{k,k-1}= E[ \bx_{k-1}^T \bx_k | \bx_{k-1} ]$ is the mean
resultant length for the conditional distribution $f(\bx_k | \bx_{k-1})=g_{k,k-1}(\bx_{k-1}^T \bx_k )$
that governs the $k$th step  $\bx_{k-1} \rightarrow \bx_k$.

More generally,  for all $n\ge 1$,  $\ell \ge 0$, the $\ell$th order Fourier coefficient of the
$n$-step distribution is
\begin{align}
 \widehat{f}^{\otimes n}_{\ell} \equiv E [ P_{\ell}( \bmu^T \bx_n ) ]=   \prod_{k=1}^n \widehat{g_{k,k-1}}_{\ell},
 \label{eq:FourierConvCoeff}
\end{align}
where $\widehat{g_{k,k-1}}_{\ell}= E[  P_{\ell}(\bx_{k-1}^T \bx_k) | \bx_{k-1} ]$ denotes the Fourier coefficient
for the conditional distribution $f(\bx_k | \bx_{k-1})=g_{k,k-1}( \bx_{k-1}^T \bx_k )$.
\end{corollary}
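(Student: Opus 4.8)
The plan is to establish the Fourier-coefficient identity \eqref{eq:FourierConvCoeff} first, since it is the more general statement, and then to recover the mean formula \eqref{eq:meann} as the special case $\ell=1$ combined with the rotational symmetry of $\bx_n$ about $\bmu$. The whole argument rides on the two results already in hand: the multiconvolution representation of Theorem \ref{prop:multiconvpdfsp} and the Fourier-product property of Proposition \ref{prop:conv}.

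For \eqref{eq:FourierConvCoeff}, I would invoke Theorem \ref{prop:multiconvpdfsp}, which expresses the \emph{pdf} of $\bx_n$ as the $n$-fold convolution $g_{n,n-1}\star_{\bmu}\cdots\star_{\bmu}g_{1,0}$ in the double-coset space. Property iii) of Proposition \ref{prop:conv} states that the $\ell$th Fourier coefficient of a $\star_{\bmu}$-convolution is the product of the individual $\ell$th coefficients; applying this $n-1$ times (equivalently, a one-line induction on the number of factors) immediately gives $\widehat{f}^{\otimes n}_{\ell}=\prod_{k=1}^n \widehat{g_{k,k-1}}_{\ell}$. It then remains to identify each factor with the conditional Legendre moment of the corresponding step: by the definition \eqref{eq:charact_func_pdf_S} applied to the symmetric density $g_{k,k-1}$, whose symmetry axis is the previous direction $\bx_{k-1}$, one has $\widehat{g_{k,k-1}}_{\ell}=E[P_{\ell}(\bx_{k-1}^T\bx_k)\mid\bx_{k-1}]$. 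Isotropy guarantees that this conditional expectation does not depend on $\bx_{k-1}$, so the product is well defined.

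For the mean \eqref{eq:meann}, I would specialize to $\ell=1$. In the adopted normalization $P_{\ell}(1)=1$ one has $P_1(t)=t$, so $\widehat{g_{k,k-1}}_1=E[\bx_{k-1}^T\bx_k\mid\bx_{k-1}]=\rho_{k,k-1}$ is exactly the mean resultant length of the $k$th step. The product formula then yields $E[\bmu^T\bx_n]=\widehat{f}^{\otimes n}_1=\prod_{k=1}^n\rho_{k,k-1}$. Finally, Theorem \ref{prop:multiconvpdfsp} guarantees that $\bx_n$ is rotationally symmetric about $\bmu$, so \eqref{eq:MomStep} applies and gives $E[\bx_n]=\left(E[\bmu^T\bx_n]\right)\bmu=\left(\prod_{k=1}^n\rho_{k,k-1}\right)\bmu$.

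The derivation is essentially bookkeeping, since the one genuinely nontrivial ingredient---that convolution on the double coset turns into multiplication of Legendre moments---has already been granted through Proposition \ref{prop:conv} iii) (which itself rests on Funk--Hecke). The point that requires a little care is the identification of the symmetry axes: each step density $g_{k,k-1}$ is symmetric about the \emph{random} direction $\bx_{k-1}$, not about the fixed $\bmu$, yet its Fourier moment still multiplies cleanly into the product taken about $\bmu$. This compatibility is precisely what the convolution theorem encodes, so beyond verifying $P_1(t)=t$ and matching each coefficient to the correct mean resultant length, no additional argument is needed.
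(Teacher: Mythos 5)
Your proof is correct and follows essentially the same route as the paper's: equation \eqref{eq:FourierConvCoeff} via Theorem \ref{prop:multiconvpdfsp} together with iterated application of the Fourier-product property of Proposition \ref{prop:conv}, and equation \eqref{eq:meann} as the $\ell=1$ case using $P_1(t)=t$ and the rotational-symmetry mean formula \eqref{eq:MomStep}. Your added remarks on identifying the symmetry axis of each step density and on isotropy making the conditional moments well defined are a welcome clarification but do not change the argument.
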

\begin{proof}
Eq. \eqref{eq:FourierConvCoeff} is derived from the multiconvolution formula \eqref{eq:multiconv} and the iterative use of
the convolution theorem \eqref{eq:FourierConv} given in Prop. \ref{prop:conv}. Eq. \eqref{eq:meann} 
is derived from the mean of a  rotationally symmetric distribution \eqref{eq:MomStep}  and from \eqref{eq:FourierConvCoeff} when $\ell=1$ since
$P_1(t)=t$ for all $p \ge 2$.
\end{proof}
Eq. \eqref{eq:meann} shows that the mean direction of the $n$-step is the
initial direction $\bmu$, while its mean
resultant length reduces to the product of  the mean resultant  lengths
associated with each step:
\begin{align}
\rho_n \equiv E[\bmu^T \bx_n ] = \prod_{k=1}^n \rho_{k,k-1}.
\label{eq:rhon}
\end{align}
This formula emphasizes that the directional dispersion increases with the number $n$ of steps,
since $0\le \rho_{k,k-1} \le 1$ for all $k \ge 1$.

Based on the Fourier coefficient formulas, it becomes possible to obtain a Fourier expansion
of the rotationally symmetric random walk {\em pdf}.
\begin{corollary}[Fourier expansion of the isotropic random walk {\em pdf}]
 For all $n\ge 1$, the Fourier expansion  {\em pdf} of the $n$-step direction
 $\bx_n \in {\mathbb S}^{p-1}$ reads
\begin{align}
\label{eq:fourierexpRW}
f(\bx_n ;\bmu) &= \sum_{\ell \geq 0} c_{p,\ell} \widehat{f}^{\otimes n}_{\ell} P_{\ell}( \bmu^T \bx),
\end{align}
where the Fourier coefficients $\widehat{f}^{\otimes n}_{\ell}$ are given in \eqref{eq:FourierConvCoeff},
the normalizing constants $c_{p,\ell}$  being defined in \eqref{eq:normalcst}.
\end{corollary}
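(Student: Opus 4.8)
The plan is to obtain \eqref{eq:fourierexpRW} by combining directly the two preceding results: Theorem \ref{prop:multiconvpdfsp}, which guarantees that the $n$-step density $f(\bx_n;\bmu)$ is rotationally symmetric about $\bmu$, and Corollary \ref{cor:FourierCoeff}, which supplies its Legendre polynomial moments. Since both ingredients are already established, the argument is essentially a substitution, and I would present it as a short corollary rather than a standalone proof.

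First I would invoke Theorem \ref{prop:multiconvpdfsp} to assert that, for every $n\ge 1$, the density $f(\bx_n;\bmu)$ belongs to $L^1(SO(p-1)\backslash SO(p)/ SO(p-1), {\mathbb R})$, i.e.\ it depends on $\bx$ only through the single cosine $\bmu^T\bx$. This is precisely the hypothesis under which the general zonal Fourier expansion \eqref{eq:fourierexp} was derived, so that expansion applies verbatim to $f(\bx_n;\bmu)$ and yields
\begin{equation*}
f(\bx_n;\bmu) = \sum_{\ell\ge 0} c_{p,\ell}\,\widehat{f}_{\ell}\,P_{\ell}(\bmu^T\bx),
\end{equation*}
where, by the defining relation \eqref{eq:charact_func_pdf_S}, the coefficients are $\widehat{f}_{\ell}=E[P_{\ell}(\bmu^T\bx_n)]$. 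The second step is then to identify these coefficients with the multiconvolution coefficients: Corollary \ref{cor:FourierCoeff} gives $E[P_{\ell}(\bmu^T\bx_n)]=\widehat{f}^{\otimes n}_{\ell}=\prod_{k=1}^n \widehat{g_{k,k-1}}_{\ell}$ for every $\ell\ge 0$. Substituting this identity into the displayed expansion produces exactly \eqref{eq:fourierexpRW}, completing the argument.

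The only point requiring care, and the step I expect to be the main (though mild) obstacle, is the validity of the termwise expansion \eqref{eq:fourierexp} for this particular density, since that formula was stated for \emph{continuous} rotationally symmetric pdfs. Here one must check that $f(\bx_n;\bmu)$ is regular enough for the Legendre series to represent it. Because $f(\bx_n;\bmu)$ is the $n$-fold convolution $g_{n,n-1}\star_{\bmu}\cdots\star_{\bmu} g_{1,0}$ and convolution on the double coset space is smoothing, the density inherits at least the regularity of its factors; for $n\ge 1$ with genuine (integrable, indeed probability) steps the convolution is well defined and square-integrable, so the expansion holds in the $L^2$ sense, and pointwise under continuity of the steps. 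I would flag this regularity remark and otherwise treat the statement as a direct consequence of Theorem \ref{prop:multiconvpdfsp} and Corollary \ref{cor:FourierCoeff}.
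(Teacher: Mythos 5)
Your argument is correct and matches the paper's own (very short) proof: the paper likewise invokes Theorem \ref{prop:multiconvpdfsp} for the rotational symmetry of $f(\bx_n;\bmu)$ and then applies the general expansion \eqref{eq:fourierexp}, with the coefficients identified via Corollary \ref{cor:FourierCoeff}. Your added remark on the regularity needed for pointwise validity of \eqref{eq:fourierexp} is a sensible caveat that the paper passes over silently, but it does not change the substance of the argument.
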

\begin{proof}
The distribution of $\bx_n$ is symmetric according to Theorem \ref{prop:multiconvpdfsp},
and the Fourier expansion formula \eqref{eq:fourierexp} can be applied.
\end{proof}

It is important to note that when all the steps of the random walk are identically distributed, i.e.
$f( \bx_k |  \bx_{k-1}) = g(\bx_{k-1}^T \bx_k)$ for all $k\ge 1$, the Fourier coefficients reduces to
\begin{align}
 \widehat{f}^{\otimes n}_{\ell} =   \left( \widehat{g}_{\ell}\right)^n,
 \label{eq:FourierConvCoeffIdent}
\end{align}
where $\widehat{g}_{\ell}= E[  P_{\ell}(\bmu^T \bx_1) ]$ is the Fourier coefficient of
the distribution that governs a random walk step.

Finally, it is possible to derive sufficient conditions to ensure the unimodality of the isotropic random walk
\begin{corollary}[Unimodality of the isotropic random walk]
\label{cor:uniRW}
Assume that the conditional directional distributions
$f(\bx_k |\bx_{k-1})=g_{k,k-1}(\bx_{k-1}^T \bx_k)$,
which are rotationally symmetric, are also absolutely continuous (or equivalently that $g_{k,k-1}$
is a continuous function on $[-1,1]$) and unimodal with a mode that
necessarily equals their mean direction.
Then, for all $n\ge 1$, the {\em pdf} of the $n$-step direction
 $\bx_n \in {\mathbb S}^{p-1}$ is also unimodal with  a mode that
 equals the original direction $\bmu \equiv \bx_0$.
 \end{corollary}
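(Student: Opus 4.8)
The plan is to argue by induction on $n$, using the two–factor result of Theorem \ref{thm:uni} as the engine of the inductive step and the iterated–convolution representation of Theorem \ref{prop:multiconvpdfsp} to expose the $n$–step density as a convolution to which that theorem applies. First I would dispose of the base case $n=1$: since $\bx_0=\bmu$ is deterministic, the density of $\bx_1$ is exactly $g_{1,0}(\bmu^T\bx_1)$, which by hypothesis is absolutely continuous, rotationally symmetric about $\bmu$, and unimodal with mode $\bmu$. Hence the claim holds trivially for $n=1$.

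For the inductive step, suppose the density $f_{n-1}(\cdot\,;\bmu)$ of $\bx_{n-1}$ is absolutely continuous, rotationally symmetric about $\bmu$, and unimodal with mode $\bmu$. By Theorem \ref{prop:multiconvpdfsp} the density of $\bx_n$ factors as $f(\bx_n;\bmu)=(g_{n,n-1}\star_{\bmu} f_{n-1})(\bx_n)$, so it suffices to check that the pair $(g_{n,n-1},f_{n-1})$ satisfies the hypotheses of Theorem \ref{thm:uni}. Rotational symmetry about $\bmu$ holds for $g_{n,n-1}$ by the isotropy assumption and for $f_{n-1}$ by the inductive hypothesis; unimodality holds for both for the same reasons; and their common mode is $\bmu$. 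It remains to confirm that this mode coincides with the mean direction of each factor: for $g_{n,n-1}$ this is assumed, and for $f_{n-1}$ it follows from Corollary \ref{cor:FourierCoeff}, which gives $E[\bx_{n-1}]=\left(\prod_{k=1}^{n-1}\rho_{k,k-1}\right)\bmu$, so the mean direction is again $\bmu$, the mean resultant length being positive since each step is strictly unimodal about its axis and hence non-degenerate. Applying Theorem \ref{thm:uni} then yields that $f(\bx_n;\bmu)$ is unimodal with mode $\bmu$; it is moreover rotationally symmetric about $\bmu$ and absolutely continuous by Prop. \ref{prop:conv} together with the smoothing property that a convolution of continuous densities on the compact group $SO(p)$ is again continuous. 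This closes the induction.

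The main obstacle I anticipate is propagating the regularity and the mode-equals-mean-direction condition through the induction, rather than the unimodality step itself. One must verify that the intermediate densities $f_{n-1}$ stay absolutely continuous (indeed continuous), so that Theorem \ref{thm:uni} is literally applicable at each stage, and that every mean resultant length $\rho_{k,k-1}$ is strictly positive so that the mean direction $\bmu$ remains well defined. The first point is handled by the smoothing property of convolution on $SO(p)$, and the second by the non-degeneracy of a symmetric distribution that is strictly unimodal about its axis; once both are secured, the inductive application of Theorem \ref{thm:uni} goes through routinely.
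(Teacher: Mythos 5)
Your proof is correct and takes essentially the same route as the paper, whose own proof is precisely the one-line statement that the result follows from the multiconvolution formula of Theorem~\ref{prop:multiconvpdfsp} and the iterative (i.e.\ inductive) application of Theorem~\ref{thm:uni}. Your write-up merely makes explicit the bookkeeping the paper leaves implicit --- propagation of continuity through the convolution and the identification of the mode with the mean direction $\bmu$ at each stage via Corollary~\ref{cor:FourierCoeff}.
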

\begin{proof} This result is derived from the
multiconvolution formula \eqref{eq:multiconv} and the
iterative use of Theorem \ref{thm:uni}.
\end{proof}


\section{Multiple scattering model}
\label{sec:mulscatmod}
The model presented in this section is motivated by the description of multiple scattering which occurs in a wide range of applications in Physics and Engineering, including optical, microwave, acoustics or elastic waves \cite{Ishimaru1999} . We consider the description of the
distribution of the output direction of propagation of a particle/wave that propagated through a random medium.
This random medium is made of an homogeneous medium/matrix containing some inclusions of size of the same order
as the particle size (or wavelength of the wave). Inclusions have different physical properties inducing that a
scattering event happens each time the particle/wave encounters an inclusion (also named scatterer).
The number and locations of the scatterers is random and between two scattering events,
the wave/particle propagates balisticaly. It is a classical approach to consider that the time
between two scattering events follows an exponential law. Such an assumption leads to Compound
Poisson Process models as described in \cite{Ning,LeBihanMargerin,Said,Chatelain2013}. Here, we consider the more
general case where the intensity of the counting process is a random
process itself. We will thus make use of Compound Cox Processes taking values on hyperspheres to model multiple scattering.

\subsection{Compound Cox process on ${\mathbb S}^{p-1}$}

Consider an initial vector $\bx_0 \equiv \bmu \in {\mathbb S}^{p-1}$. After a time
$t$ (the time spent propagating in the random medium), assume the resulting vector $\bx_t \in {\mathbb S}^{p-1}$ is a
mixture made of contributions
of rotated versions of $\bmu$ an arbitrary number of times $n$. The weight of each contribution
is simply the probability that the wave/particle encountered $n$ scatterers during the period of time $t$,
{\em i.e.} ${\mathbb P}[N(t)=n]$. $N(t)$ is called the \emph{counting process}.
In the classical compound Poisson process \cite{Said}, $N(t)$ is an homogeneous Poisson process
and each individual weight is equal to $e^{-\lambda t}(\lambda t)^n/n!$ where the constant $\lambda$ is the
Poisson intensity parameter. This weight is
obtained when the time between two rotations of the vector is chosen to
have an exponential distribution with parameter $\lambda$. The equivalent
Poisson parameter $\lambda_t=\lambda t$ of $N(t)$ consists of  the mean number of rotation
events in the elapsed time $t$. In Physics, it is related to the {\em mean free
path $\ell$} like $\ell=c/\lambda$ where $c$ is the celerity in the medium. Thus,
$\ell$ is the mean distance between two consecutive rotation events (see \cite{LeBihanMargerin}).

Now, if the counting process $N(t)$ is no more a homogeneous Poisson process,
an alternative is to consider that the intensity measure of $N(t)$ is a random process $\Lambda(t)$.
$N(t)$ is then called a mixed Poisson process, or a Cox process \cite{Lefebvre2006}.

In this case, the distribution of $N(t)$ is a mixed Poisson distribution which reads \cite{Saleh1978}:
\begin{align}
{\mathbb P}[N(t)=n] & = {\cal P}_n\left[f_{\Lambda(t)}\right] \\
 & = \int_{0}^{+\infty} \frac{e^{-\lambda_t} \lambda_t^n}{n!} f_{\Lambda(t)}(\lambda_t) d\lambda_t
\end{align}
where ${\cal P}_n\left[f_{\Lambda(t)}\right]$ is called the \emph{Poisson transform} of the
mixing {\em pdf} $f_{\Lambda(t)}$ \cite{Saleh1978}.

In the isotropic case, the steps associated with the scattering events are governed by rotationally
symmetric distribution as explained in section \ref{sec:RW}.
For the sake of simplicity, all these random steps are assumed to be identically
distributed, and we denote as $\widehat{f}_\ell \equiv \widehat{g}_\ell$ the $\ell$th order Fourier coefficient
of the random step distribution, for all $\ell\ge 0$. This yields that
the $\ell$th order Fourier coefficient of the $n$-step random walk direction reduces to
$\widehat{f}^{\otimes n}_{\ell} =   \left( \widehat{f}_{\ell}\right)^n$
according to \eqref{eq:FourierConvCoeffIdent}.

Conditioning by the number of scattering events,
one gets the expression of the density of $\bx_t$:
\begin{align}
\begin{split}
f(\bx_t;\bmu)
=& {\cal P}_0\left[f_{\Lambda_t}\right] \delta_{\bmu}(\bx_t)  \\
&  + \sum_{n \geq 1} {\cal P}_n\left[f_{\Lambda_t}\right] f^{\otimes n}(\bx_t; \bmu ),
\label{eq:ftimet}
\end{split}
\end{align}
\noindent where
$\delta_{\bmu}(\bx_t)$ denotes a mass located in the original direction $\bmu \equiv \bx_0 \in {\mathbb S}^{p-1}$, and $f^{\otimes n}(\bx_t;\bmu )$ denotes the $n$-step random walk {\em pdf} with
original direction $\bmu$.

The {\em pdf} $f(\bx_t;\bmu)$ thus consists of a
mixture of $n$-fold convolutions of identical distributions according to Thm. \ref{prop:multiconvpdfsp},
for all $n>0$, plus a mass in $\bmu$ which corresponds to direct paths.

Equation (\ref{eq:ftimet}) covers several cases. If $N(t)$ is a homogeneous Poisson process
with intensity parameter $\lambda$, then
${\cal P}_n\left[f_{\Lambda_t}(w_t)\right]=\frac{e^{-\lambda t}(\lambda t)^n}{n!}$
and this case was considered in \cite{Chatelain2013,LeBihanMargerin,Said}.
If $N(t)$ is a Cox process, an interesting case appears when
its distribution belongs  to an exponential family.
As explained in \cite{Ferrari2007}, this happens when $\Lambda(t)$
is a stationary i.i.d. Gamma process , {\em i.e.}
$\Lambda_t \sim \mathcal{G}(t;\xi,\theta) \sim \mathcal{G}(\xi t,\theta)$ where $\xi$
is the shape parameter and $\theta$ the scale parameter.
The distribution of this process is thus:
\begin{equation}
f_{\Lambda_t}(\lambda_t=x) = \frac{\theta^{\xi t}}{\Gamma(\xi t)} x^{\xi t -1} e^{-\theta x},
\end{equation}
for all $x>0$.
It comes by direct calculation that in this case the Poisson transform is
\begin{equation}
{\cal P}_n\left[f_{\Lambda_t}\right]=\frac{\Gamma(n+\xi t)}{ n! \Gamma(\xi t)}
\frac{\theta^{\xi t}}{(\theta+1)^{n+\xi t}}
\label{eq:PoissonT_gammapdf}
\end{equation}
which shows that when $\Lambda_t$ is a Gamma process with scale parameter $\xi$ and shape parameter $\theta$,
$N(t)$ is a negative binomial process.
In fact, in such case, the weight coefficients in (\ref{eq:ftimet}) follow a negative binomial law
$\mathcal{NB}(r_t,q)$ with stopping-time parameter $r_t=\xi t$ and success
probability $q= (\theta+1)^{-1}$.

\subsection{Characteristic function of the multiple scattering process}
\label{subsec:charfmsprocess}

The distribution of the multiple scattering process  \eqref{eq:ftimet} consists of the mixture of a mass in
$\bmu$ for the direct paths and a continuous distribution that consists of the $n$-fold random step
convolutions for all $n\ge 1$.
This continuous distribution is the conditional distribution of $\bx_t$ given there is at least one diffusion, i.e.
$N(t)>0$.
Its  {\em pdf} denoted as $f^{\otimes>0}$  reads $f^{\otimes>0}(\bx_t; \bmu ) = c_0 h^{\otimes>0}(\bx_t; \bmu )$
where $c_0= \left(1-{\cal P}_0\left[f_{\Lambda_t}\right] \right)^{-1}$ is the normalizing constant of the truncated distribution corresponding to the event $N(t)>0$ and $h^{\otimes>0}$ is the following unnormalized density
\begin{align}
 h^{\otimes>0}(\bx_t; \bmu ) &=
 \sum_{n \geq 1} {\cal P}_n\left[f_{\Lambda_t}\right] f^{\otimes n}(\bx_t; \bmu ).
\label{eq:pdfPosDiff}
 \end{align}
This {\em pdf} is rotationally symmetric about $\bmu$ as a mixture of symmetric distributions.
This shows that $h^{\otimes>0}(\bx_t; \bmu)$ admits a Fourier expansion \eqref{eq:fourierexp}.
Note also that when the distribution that governs the random steps is unimodal, the {\em pdf} of $\bx_t$ is unimodal with mode $\bmu$ as a mixture of unimodal distributions according to Corollary \ref{cor:uniRW}.

Considering the Laplace transform of the mixing process $\Lambda_t$
\begin{align*}
 \mathcal{L}_{\Lambda_t}[z]=E[e^{-z \Lambda_t}]= \int_{0}^{+\infty} e^{-z \lambda_t }
 f_{\Lambda(t)}(\lambda_t) d\lambda_t,
\end{align*}
we obtain the following expression of the Fourier coefficients.
\begin{lemma}
\label{lemma:DiffFourierPos}
The Fourier coefficients of the continuous unnormalized density $h^{\otimes>0}$ express as
 \begin{align}
  \widehat{h^{\otimes>0}}_{\ell} &=
  \mathcal{L}_{\Lambda_t}\left[1-\widehat{f}_{\ell}\right] - \mathcal{L}_{\Lambda_t}\left[1\right],
  \label{eq:FourierDiff}
 \end{align}
 for all $\ell \ge 0$, where $\widehat{f}_{\ell}$ is the Fourier coefficient of
 the isotropic and identically distributed random steps 
 and where $\mathcal{L}_{\Lambda_t}\left[1\right]=  {\cal P}_0 \left[ f_{\Lambda_t} \right]= \Pr(N_t=0)$.
\end{lemma}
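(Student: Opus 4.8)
The plan is to compute the Legendre moment of $h^{\otimes>0}$ by applying the Fourier operator term-by-term to the series \eqref{eq:pdfPosDiff} and then resumming the resulting exponential generating function into a Laplace transform.

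First I would use linearity of the Legendre moment map $f \mapsto \widehat{f}_\ell = \int_{\mathbb{S}^{p-1}} f(\bx;\bmu) P_\ell(\bmu^T \bx)\, d\bx$ applied to \eqref{eq:pdfPosDiff}, which gives $\widehat{h^{\otimes>0}}_{\ell} = \sum_{n \geq 1} {\cal P}_n[f_{\Lambda_t}]\, \widehat{f}^{\otimes n}_{\ell}$. Since the random steps are assumed identically distributed, I would invoke \eqref{eq:FourierConvCoeffIdent} to replace $\widehat{f}^{\otimes n}_{\ell}$ by $(\widehat{f}_{\ell})^n$, and then substitute the integral definition of the Poisson transform, obtaining
\begin{align*}
\widehat{h^{\otimes>0}}_{\ell} = \sum_{n \geq 1} \left( \int_0^{+\infty} \frac{e^{-\lambda_t} \lambda_t^n}{n!} f_{\Lambda(t)}(\lambda_t)\, d\lambda_t \right) (\widehat{f}_{\ell})^n.
\end{align*}

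The next step is to interchange summation and integration and recognize the exponential series. Pulling the sum inside the integral produces the factor $\sum_{n \geq 1} (\lambda_t \widehat{f}_{\ell})^n / n! = e^{\lambda_t \widehat{f}_{\ell}} - 1$, whence
\begin{align*}
\widehat{h^{\otimes>0}}_{\ell} = \int_0^{+\infty} e^{-\lambda_t} \left( e^{\lambda_t \widehat{f}_{\ell}} - 1 \right) f_{\Lambda(t)}(\lambda_t)\, d\lambda_t.
\end{align*}
Splitting into two integrals and combining exponents yields $\int_0^{+\infty} e^{-\lambda_t(1 - \widehat{f}_{\ell})} f_{\Lambda(t)}(\lambda_t)\, d\lambda_t - \int_0^{+\infty} e^{-\lambda_t} f_{\Lambda(t)}(\lambda_t)\, d\lambda_t$, which are precisely $\mathcal{L}_{\Lambda_t}[1-\widehat{f}_{\ell}]$ and $\mathcal{L}_{\Lambda_t}[1]$, establishing \eqref{eq:FourierDiff}. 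The remaining identification $\mathcal{L}_{\Lambda_t}[1] = {\cal P}_0[f_{\Lambda_t}] = \Pr(N_t=0)$ follows by taking $n=0$ in the Poisson transform, since $e^{-\lambda_t}\lambda_t^0/0! = e^{-\lambda_t}$.

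The main obstacle is justifying the interchange of the sum and the integral. I would control it using the bound $|\widehat{f}_{\ell}| \le 1$ noted just after \eqref{eq:normalcst} (itself a consequence of $|P_\ell(t)| \le 1$). Applying Tonelli's theorem to the nonnegative integrand obtained by replacing $(\widehat{f}_{\ell})^n$ with $|\widehat{f}_{\ell}|^n$, the resulting majorant resums to $\int_0^{+\infty} \big( e^{-\lambda_t(1-|\widehat{f}_{\ell}|)} - e^{-\lambda_t} \big) f_{\Lambda(t)}(\lambda_t)\, d\lambda_t$, which is bounded by $1$ because $f_{\Lambda(t)}$ is a probability density; this absolute convergence then legitimizes the interchange via Fubini's theorem. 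Everything else is the elementary series summation above.
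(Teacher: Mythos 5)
Your proposal is correct and follows essentially the same route as the paper's own proof: term-by-term application of the Legendre moment map to the mixture \eqref{eq:pdfPosDiff}, reduction to $\sum_{n\ge 1}{\cal P}_n[f_{\Lambda_t}]\,(\widehat{f}_{\ell})^n$ via \eqref{eq:FourierConvCoeffIdent}, and a Fubini interchange justified by $|\widehat{f}_{\ell}|\le 1$ to resum the exponential series into the two Laplace transforms. Your explicit Tonelli majorant is a slightly more detailed justification of the interchange than the paper's remark that $\mathcal{L}_{\Lambda_t}[z]$ is well defined for $z\ge 0$, but it is the same argument in substance.
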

\begin{proof} Based on the orthogonality property of the Legendre polynomials,
it comes from \eqref{eq:pdfPosDiff} and the Fourier expansion  of each $n$-fold
convolved {\em pdf} $f^{\otimes n}$ \eqref{eq:fourierexpRW}, for $n\ge 1$, that
  \begin{align}
  \widehat{h^{\otimes>0}}_{\ell} &=  \sum_{n \ge 1} {\cal P}_n\left[f_{\Lambda_t}  \right]  \widehat{f}_{\ell}^n,
  \label{eq:FourierDiffPos}
  \end{align}
  where $|\widehat{f}_{\ell}|\le 1$ by construction. According to Fubini theorem, one can interchange
  the summation symbol with the Poisson transform integral. This yields
  \begin{align*}
  \widehat{h^{\otimes>0}}_{\ell} &= \int_{0}^{+\infty} e^{-\lambda_t} f_{\Lambda_t}(\lambda_t) \sum_{n \ge 1}
  \frac{ \left( \lambda_t \widehat{f}_{\ell}\right)^n}{n!} d\lambda_t,\\
  &=  \int_{0}^{+\infty} e^{-\lambda_t} f_{\Lambda_t} (\lambda_t)
  \left[ e^{ \lambda_t \widehat{f}_{\ell}} -1 \right] d\lambda_t,\\
  &=  \mathcal{L}_{\Lambda_t}\left[1-\widehat{f}_{\ell}\right] - \mathcal{L}_{\Lambda_t}\left[1\right],
  \end{align*}
  for all $\ell\ge 0$. Note finally that the Laplace transform $\mathcal{L}_{\Lambda_t}[z]$ is well-defined for all
  $z\ge 0$ since $\Lambda_t$ is a positive random variable. Thus the Fubini theorem holds as $1-|\widehat{f}_{\ell}|\ge 0$,
  and the Fourier coefficients are well defined.
\end{proof}
Consider now the probability generating function of the Cox process $N_t$:
\begin{align*}
  G_{N_t}[z]=E\left[ z^{N_t} \right]= \sum_{n\ge 0 } {\cal P}_n \left[ f_{\Lambda_t} \right] z^n.
\end{align*}
A classical result about mixed Poisson distribution, see for instance \cite{Ferrari2007,Chatelain2009}, is that
$G_{N_t}$ can be easily
derived from the Laplace transform of the mixing distribution as
\begin{align*}
 G_{N_t}[z]= \mathcal{L}_{\Lambda_t}\left[1-z\right].
\end{align*}
The Fourier coefficients given in Lemma \ref{lemma:DiffFourierPos} can thus be expressed in an equivalent way as
 \begin{align}
  \widehat{h^{\otimes>0}}_{\ell} &=
G_{N_t}\left[\widehat{f}_{\ell}\right] - G_{N_t}\left[0\right],
  \label{eq:FourierDiffPosGen}
  \end{align}
where $ G_{N_t}\left[0\right]= \mathcal{L}_{\Lambda_t}\left[1\right]=  {\cal P}_0 \left[ f_{\Lambda_t} \right]= \Pr(N_t=0)$.

Finally, based on the Fourier expansion of the continuous function  $h^{\otimes>0}$, we obtain the following results for the distribution of $\bx_t$.
\begin{prop}
\label{prop:FourierScat}
The {\em pdf} of the direction $\bx_t$ in the multiple scattering process can be expanded as
\begin{align}
f(\bx_t ; \bmu)
=& {\cal P}_0\left[f_{\Lambda_t}\right] \delta_{\bmu}(\bx_t)  +
\sum_{\ell \ge 0 } c_{p,\ell} \widehat{h^{\otimes>0}}_{\ell} P_{\ell} (\bmu^T \bx_t),
 \label{eq:pdfexpScat}
\end{align}
where the coefficients $\widehat{h^{\otimes>0}}_{\ell}$ are given in
\eqref{eq:FourierDiff}, or equivalently in \eqref{eq:FourierDiffPosGen}.\\
Moreover, the Legendre polynomial moments read
\begin{align}
E[P_\ell( \bmu^T \bx_t )] &= G_{N_t}\left[\widehat{f}_{\ell}\right] =
 \mathcal{L}_{\Lambda_t}\left[1-\widehat{f}_{\ell}\right],
 \label{eq:scatLeg}
\end{align}
for all $\ell \ge 0$, where $\widehat{f}_{\ell}$ is the Fourier coefficient of
the isotropic and identically distributed random steps.
\end{prop}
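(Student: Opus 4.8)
The plan is to assemble the result directly from the mixture representation \eqref{eq:ftimet} together with the Fourier-coefficient computation already carried out in Lemma \ref{lemma:DiffFourierPos}. First I would split the density of $\bx_t$ into its singular and continuous parts, writing $f(\bx_t;\bmu) = {\cal P}_0[f_{\Lambda_t}]\,\delta_{\bmu}(\bx_t) + h^{\otimes>0}(\bx_t;\bmu)$, where the continuous part $h^{\otimes>0}$ is exactly the series \eqref{eq:pdfPosDiff}. Since $h^{\otimes>0}$ is a mixture of the rotationally symmetric and continuous $n$-fold convolutions $f^{\otimes n}(\cdot;\bmu)$, it is itself rotationally symmetric about $\bmu$ and continuous, so the Fourier expansion formula \eqref{eq:fourierexp} applies to it. Substituting the coefficients $\widehat{h^{\otimes>0}}_{\ell}$ supplied by Lemma \ref{lemma:DiffFourierPos} then yields the claimed expansion \eqref{eq:pdfexpScat}; no further convergence argument is needed here, as the cited expansion theorem already guarantees the series for the continuous function $h^{\otimes>0}$.

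For the Legendre polynomial moments I would compute $E[P_\ell(\bmu^T\bx_t)]$ by integrating $P_\ell(\bmu^T\bx_t)$ against $f(\bx_t;\bmu)$ term by term. The singular part contributes $P_\ell(\bmu^T\bmu)\,{\cal P}_0[f_{\Lambda_t}] = {\cal P}_0[f_{\Lambda_t}]$ since $P_\ell(1)=1$, while each term of the continuous part contributes ${\cal P}_n[f_{\Lambda_t}]\,E[P_\ell(\bmu^T\bx_n)] = {\cal P}_n[f_{\Lambda_t}]\,(\widehat{f}_{\ell})^n$ by the identically distributed random-walk coefficient formula \eqref{eq:FourierConvCoeffIdent}. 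Summing over $n\ge 0$ gives $\sum_{n\ge 0} {\cal P}_n[f_{\Lambda_t}]\,(\widehat{f}_{\ell})^n$, which is precisely the probability generating function $G_{N_t}[\widehat{f}_{\ell}]$. Equivalently, one can read this off Lemma \ref{lemma:DiffFourierPos} in the form \eqref{eq:FourierDiffPosGen}: the atom supplies the term ${\cal P}_0[f_{\Lambda_t}]=G_{N_t}[0]$ that cancels the subtracted $G_{N_t}[0]$ in $\widehat{h^{\otimes>0}}_{\ell}$. The final equality $G_{N_t}[\widehat{f}_{\ell}]=\mathcal{L}_{\Lambda_t}[1-\widehat{f}_{\ell}]$ then follows from the mixed-Poisson identity $G_{N_t}[z]=\mathcal{L}_{\Lambda_t}[1-z]$ recalled just above the statement.

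The only delicate point is the legitimacy of the term-by-term interchange of summation and integration in the moment computation. This is controlled exactly as in the proof of Lemma \ref{lemma:DiffFourierPos}: since $|\widehat{f}_{\ell}|\le 1$ and $|P_\ell(t)|\le 1$ for $|t|\le 1$, the series is dominated by the convergent probability series $\sum_{n} {\cal P}_n[f_{\Lambda_t}]$, so Fubini's theorem permits the interchange. I expect no genuine obstacle here; the main care is the bookkeeping of the Dirac mass at $\bmu$, which does not admit an $L^2$ Fourier series on its own—this is why the expansion \eqref{eq:pdfexpScat} isolates it as a separate atom rather than folding it into the Legendre series—yet it still contributes the value $P_\ell(1)=1$ to every Legendre moment, thereby restoring the full generating function $G_{N_t}$ from its truncated counterpart.
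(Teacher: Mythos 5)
Your proposal is correct and follows essentially the same route as the paper: the expansion \eqref{eq:pdfexpScat} is obtained by isolating the atom at $\bmu$ and applying the Fourier expansion \eqref{eq:fourierexp} to the rotationally symmetric continuous part with the coefficients from Lemma \ref{lemma:DiffFourierPos}, and the moment formula \eqref{eq:scatLeg} follows by linearity, the atom contributing $P_\ell(1)\,{\cal P}_0[f_{\Lambda_t}]=G_{N_t}[0]$ and the continuous part contributing $\widehat{h^{\otimes>0}}_{\ell}=G_{N_t}[\widehat{f}_{\ell}]-G_{N_t}[0]$. Your added remark on justifying the term-by-term interchange via the domination $|\widehat{f}_{\ell}|\le 1$, $|P_\ell(t)|\le 1$ is consistent with the argument already given in the proof of Lemma \ref{lemma:DiffFourierPos}.
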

\begin{proof}
 It remains to show the Legendre polynomial moment formula \eqref{eq:scatLeg}.
 The distribution of $\bx_t$ is a linear mixture of a mass in $\bmu$ with probability
 ${\cal P}_0\left[f_{\Lambda_t}\right]=  G_{N_t}\left[0\right]$
 and a continuous distribution.
 As $P_\ell( \bmu^T \bmu )= P_\ell( 1 )=1$, the contribution of the mass in $\bmu$ to the $\ell$th order Legendre
 moment reduces to its probability.
 By linearity, it comes that
 $E\left[P_\ell( \bmu^T \bx_t )\right]= G_{N_t}\left[0\right] +  \widehat{h^{\otimes>0}}_{\ell} = G_{N_t}\left[\widehat{f}_{\ell}\right]$
 according to \eqref{eq:FourierDiffPosGen}.
\end{proof}
It is important to note that the distribution of $\bx_t$ is not continuous due to the mass in $\bmu$.
As written in \eqref{eq:pdfexpScat}, only the continuous part admit a Fourier expansion in the pointwise convergence sense.
However, all the Legendre polynomial moments of the multiple scattering direction $\bx_t$
are well-defined and have  tractable expressions given in \eqref{eq:scatLeg}.

One consequence is that the mean resultant length of the multiple scattering process $\bx_t$ 
express as $\rho_t= E[ \bmu^T \bx_t ]= G_{N_t}\left[\widehat{f}_{1}\right]$. And due to the rotational 
symmetry, the mean of $\bx_t$ is
\begin{align*}
 E[\bx_t] &= \rho_t \bmu =  G_{N_t}\left[\widehat{f}_{1}\right] \bmu.
\end{align*}

Proposition \ref{prop:FourierScat} allows us to express the Fourier
coefficients and the Legendre polynomial moments in two particular cases
that we are especially interested in. First in the case of the homogeneous Poisson process where $\Lambda_t= \lambda_t$ is
a deterministic constant, it comes that
 \begin{align}
 \begin{split}
  \widehat{h^{\otimes>0}}_{\ell} 
  &= e^{-\lambda_t}  \left[ e^{ \lambda_t  \widehat{f}_{\ell} } -1 \right],\\
  E\left[P_\ell( \bmu^T \bx_t )\right] &=   e^{ -\lambda_t \left[ 1 - \widehat{f}_{\ell} \right] },
  \label{eq:FourierDiffPoiss}
  \end{split}
 \end{align}
 for all $\ell \ge 0$.
Second, when $\Lambda_t~\sim \mathcal{G}(\xi t, \theta)$ is a Gamma process, i.e. that
$N_t~\sim \mathcal{NB}(r_t,q)$ is a Negative Binomial process with $r_t=\xi t$ and $q=(\theta+1)^{-1}$,
straightforward
computations lead to
 \begin{align} \begin{split}
  \widehat{h^{\otimes>0}}_{\ell} &= (1-q)^{r_t}
  \left[ \left( 1 - q  \widehat{f}_{\ell} \right)^{-r_t} -1  \right],\\
  E\left[P_\ell( \bmu^T \bx_t )\right] &=   \left( \frac{ 1-q }{ 1 - q  \widehat{f}_{\ell} } \right)^{r_t},
  \label{eq:FourierDiffGam}
  \end{split}
 \end{align}
 for all $\ell \ge 0$.

\section{von Mises-Fisher random walk and multiple scattering process on ${\mathbb S}^{p-1}$}
\label{sec:VMFRWSP}
\subsection{von Mises-Fisher distribution on ${\mathbb S}^{p-1}$}
\label{sec:vMF}

The von Mises-Fisher distribution \cite[p. 167]{Mardia} is probably
the most important distribution in the
statistics of hyperspherical data and plays a role on ${\mathbb S}^{p-1}$
analogue to the role of the normal distribution on the real line.
This distribution, denoted as $M_p(\bmu,\kappa)$, is defined by the following
{\em pdf} for all $\bx \in {\mathbb S}^{p-1}$
\begin{align}
\label{eq:VMF}
f(\bx ; \bmu , \kappa)=
\frac {\kappa^{p/2-1}} {(2\pi)^{p/2}I_{p/2-1}(\kappa)}
e^{\kappa \bx^T \bmu  },
\end{align}
where $I_{\nu}(\cdot)$ is the modified Bessel function \cite[p.
374]{Abramowitz}, $\bmu \in {\mathbb S}^{p-1}$ corresponds to the mean direction
and $\kappa \ge 0 $ is the {\it concentration parameter}: the larger the value
of $\kappa$, the more concentrated is the distribution about the mean direction
$\bmu$. Conversely, when
$\kappa=0$ the distribution reduces to the uniform distribution on ${\cal
S}^{p-1}$. Its mean resultant length takes the form:
\begin{align}
\label{eq:rhoVMF}
 \rho &\equiv A_p(\kappa) = \frac{I_{p/2}(\kappa)}{I_{p/2-1}(\kappa)},
\end{align}
which reduces to
\begin{align*}
 \rho &\equiv A_3(\kappa) = \coth{\kappa} - \frac{1}{\kappa},
\end{align*}
when $p=3$.
Based on the {\em pdf} expression \eqref{eq:VMF}, it is straightforward to
see that this distribution is rotationally symmetric and unimodal with mode
$\bmu$ when $\kappa>0$.
Finally, as explained in \cite{Kent78}, the characteristic function of the von Mises Fisher distribution
$f(\bx ; \bmu , \kappa)$ takes the form:
\begin{equation}
\label{eq:legVMF}
 \widehat{f}_{\ell}(\kappa) = E\left[ P_{\ell}(\bx^T\bmu)\right] = \frac{I_{\ell+\nu}(\kappa)}{I_{\nu}(\kappa)},
\end{equation}
where $\nu=p/2-1$, for $\kappa>0$ and $\ell\ge 0$.

\subsection{von Mises-Fisher random walk}

One problem of characterizing more deeply and inferring efficiently
the distribution of the $n$th-step direction $\bx_n$
is that, except the Fourier series expansion \eqref{eq:fourierexpRW},
there is no simple closed form expression
of the density of the multiply convolved distribution \eqref{eq:multiconv}.

However, when all the isotropic random walk steps are governed by
unimodal distributions, Corollary \ref{cor:uniRW}
says that the $n$-step random walk direction  $\bx_n$ is also governed
by an unimodal rotationally symmetric distributions with mode the original
direction $\bmu \equiv \bx_0$. This suggests that the distribution of $\bx_n$
can be well fitted by a standard unimodal rotationally symmetric distributions
with mode $\bmu$. Due to the properties of the vMF distributions presented in
section \ref{sec:vMF}, this family seems to be a good candidate to
fit the distribution of $\bx_n$, for $n\ge 1$. It leads to model the
distribution of $\bx_n$ by a $M_p(\bmu,\tilde{\kappa}_n)$  distribution,
where $\tilde{\kappa}_n$ is an equivalent concentration parameter
for the $n$-step direction.

Moreover, we consider now that all the random walk steps \eqref{eq:stepSym}
are identically distributed according to a vMF distribution with concentration
parameter $\kappa$, that is
\begin{align*}
 \bx_k | \bx_{k-1} \sim M_p(\bx_{k-1},\kappa),
\end{align*}
and the resulting random walk is called the {\em vMF random walk} with concentration parameter $\kappa$.
This random walk is isotropic and unimodal with mode $\bmu \equiv \bx_0$
according to Corollary \ref{cor:uniRW}. It is possible to obtain, in the high concentration case,
a simple vMF asymptotic distribution for the
$n$-step direction.
\begin{thm}
\label{thm:vMFapprox}
Consider the  vMF random walk with concentration parameter $\kappa$.
Then, in the large $\kappa$ and  small $n\ge 1$ case, i.e. when $n / \kappa \rightarrow 0$,  $\bx_n$ is asymptotically distributed as
$M_p(\bmu,\tilde{\kappa}_n)$ where
\begin{align}
  \label{eq:kappaEqThm}
  \tilde{\kappa}_n &= \frac{ \kappa -1/2 }{ n } + 1/2
\end{align}
is the equivalent concentration parameter. The asymptotic distribution 
yields a third-order approximation of the Fourier coefficients
\begin{align*}
 \widehat{f}^{\otimes n}_{\ell}= \widetilde{f}^n_{\ell} +  O \left( \left(\frac{n}{\kappa}\right)^3 \right), \quad 
 \textrm{ as } \ \frac{n}{\kappa} \rightarrow 0,
\end{align*}
for any $\ell \ge 0$, where
$\widehat{f}^{\otimes n}_{\ell}$ is the $\ell$th-order Fourier coefficient of the $\bx_n$ distribution,
while $\widetilde{f}^n_{\ell}$ denotes the  $\ell$th-order Fourier coefficient of the asymptotic distribution
$M_p(\bmu,\tilde{\kappa}_n)$.
\end{thm}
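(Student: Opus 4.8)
The plan is to pass to Fourier coefficients, where the vMF law linearises the problem, and then run a matched asymptotic expansion of a ratio of Bessel functions. Since all steps are i.i.d.\ vMF, Corollary~\ref{cor:FourierCoeff} together with \eqref{eq:FourierConvCoeffIdent} and the vMF characteristic function \eqref{eq:legVMF} give
\begin{align*}
\widehat{f}^{\otimes n}_{\ell} = \left( \frac{I_{\ell+\nu}(\kappa)}{I_{\nu}(\kappa)} \right)^{n}, \qquad \widetilde{f}^{n}_{\ell} = \frac{I_{\ell+\nu}(\tilde{\kappa}_n)}{I_{\nu}(\tilde{\kappa}_n)}, \qquad \nu = \frac{p}{2}-1.
\end{align*}
Writing $\phi_\ell(\kappa) = \log I_{\ell+\nu}(\kappa) - \log I_{\nu}(\kappa)$, the stated coefficient estimate is equivalent to $n\phi_\ell(\kappa) = \phi_\ell(\tilde{\kappa}_n) + O((n/\kappa)^3)$, which I would establish first and then exponentiate, using that $\widetilde{f}^{n}_{\ell}\in[0,1]$ to turn the resulting multiplicative error into the additive bound claimed. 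The case $\ell=0$ is trivial since both coefficients equal $1$.

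The heart of the argument is a sharp two-term expansion of $\phi_\ell$. I would substitute the classical large-argument series
\begin{align*}
I_{\mu}(\kappa) \sim \frac{e^{\kappa}}{\sqrt{2\pi\kappa}}\left( 1 - \frac{a_1(\mu)}{\kappa} + \frac{a_2(\mu)}{\kappa^2} - \cdots \right), \quad a_1(\mu)=\frac{4\mu^2-1}{8}, \quad a_2(\mu)=\frac{(4\mu^2-1)(4\mu^2-9)}{128}.
\end{align*}
The prefactor $e^{\kappa}/\sqrt{2\pi\kappa}$ does not depend on the order and cancels in the difference of logarithms, so expanding each logarithm to second order yields $\phi_\ell(\kappa) = -b_\ell/\kappa + c_\ell/\kappa^2 + O(\kappa^{-3})$ with $b_\ell = a_1(\ell+\nu)-a_1(\nu)$ and $c_\ell = [a_2(\ell+\nu)-\tfrac12 a_1(\ell+\nu)^2]-[a_2(\nu)-\tfrac12 a_1(\nu)^2]$. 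The crucial step, which I expect to be the only genuinely delicate point, is the algebraic identity $a_2(\mu)-\tfrac12 a_1(\mu)^2 = -(4\mu^2-1)/16$; it forces $c_\ell = -\tfrac12 b_\ell$, with $b_\ell = \tfrac12\ell(\ell+2\nu) = \tfrac12\ell(\ell+p-2)$. Recognising $-b_\ell/\kappa - b_\ell/(2\kappa^2)$ as the two-term expansion of $-b_\ell/(\kappa-1/2)$ then gives the compact form
\begin{align*}
\phi_\ell(\kappa) = -\frac{b_\ell}{\kappa-1/2} + O(\kappa^{-3}).
\end{align*}

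With this form the matching is forced, and it explains the precise shift $\tilde{\kappa}_n = (\kappa-1/2)/n + 1/2$: it is exactly the value for which $\tilde{\kappa}_n - 1/2 = (\kappa-1/2)/n$, hence $n/(\kappa-1/2) = 1/(\tilde{\kappa}_n-1/2)$ and
\begin{align*}
n\phi_\ell(\kappa) = -\frac{b_\ell}{\tilde{\kappa}_n-1/2} + n\,O(\kappa^{-3}) = \phi_\ell(\tilde{\kappa}_n) + O((n/\kappa)^3).
\end{align*}
Here I would check the two bookkeeping facts that close the estimate: $n\kappa^{-3} = (n/\kappa)^3/n^2 \le (n/\kappa)^3$, and $\tilde{\kappa}_n \ge \tfrac12\kappa/n$ for $\kappa$ large so that $\tilde{\kappa}_n^{-3} = O((n/\kappa)^3)$. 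Exponentiating gives the coefficient statement, and since both $\bx_n$ and $M_p(\bmu,\tilde{\kappa}_n)$ are rotationally symmetric they are determined by their Legendre moments through the expansion \eqref{eq:fourierexp}; coefficientwise agreement up to $O((n/\kappa)^3)$ therefore yields the asserted asymptotic identification of the two laws.
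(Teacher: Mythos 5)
Your proposal is correct and follows essentially the same route as the paper's proof: reduce to the Legendre/Fourier coefficients $\bigl(I_{\ell+\nu}(\kappa)/I_{\nu}(\kappa)\bigr)^{n}$ versus $I_{\ell+\nu}(\tilde{\kappa}_n)/I_{\nu}(\tilde{\kappa}_n)$, expand via the large-argument asymptotics of the modified Bessel functions to order $\kappa^{-2}$, and conclude by the method of moments using rotational symmetry. Your reorganisation through $\phi_\ell=\log I_{\ell+\nu}-\log I_{\nu}$ and the identity $a_2(\mu)-\tfrac12 a_1(\mu)^2=-(4\mu^2-1)/16$, which resums the expansion as $-b_\ell/(\kappa-\tfrac12)+O(\kappa^{-3})$ and thereby \emph{derives} the shift in $\tilde{\kappa}_n$ rather than verifying it, is a cleaner presentation of the same computation (the paper instead writes out both three-term expansions explicitly and checks they coincide).
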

\begin{proof} Both $M_p(\bmu,\tilde{\kappa}_n)$ and the distribution of $\bx_n$ are rotationally symmetric about
$\bmu$. Thus it is sufficient to show the asymptotic equivalence of their tangent part about $\bmu \in {\mathbb S}^{p-1}$.
The tangent part is bounded in $[-1,1]$, which ensures that its moments are well-defined and also belong to  $[-1,1]$.
Thus its distribution is uniquely defined by its moments, see for instance \cite[Theorem 30.1, p. 388]{Billingsley95}.
As the family of Legendre polynomials form a polynomial basis, we can conclude by the method of moments that
the distributions are asymptotically equivalent if their Fourier coefficients are asymptotically equivalent.

For $n\ge 1$, we obtain from \eqref{eq:FourierConvCoeffIdent} and \eqref{eq:legVMF}, and from \eqref{eq:legVMF} and \eqref{eq:kappaEqThm}
respectively,  that
\begin{align*}
 \widehat{f}^{\otimes n}_{\ell}=  \left( \frac{I_{\ell+\nu}(\kappa)}{I_{\nu}(\kappa)} \right)^n, \qquad & 
 \widetilde{f}^n_{\ell}  =  \frac{I_{\ell+\nu}(\tilde{\kappa}_n)}{I_{\nu}(\tilde{\kappa}_n)}.
\end{align*}
Using now the following asymptotic expansion of the modified Bessel function for large
$\kappa$  \cite[p. 377]{Abramowitz}:

{\small
\begin{align*}
 I_{\nu}(\kappa)=& \frac{e^{\kappa}}{ \sqrt{2\pi\kappa} } \left[
 1 - \frac{4 \nu^2 -1}{8 \kappa} +  \frac{(4 \nu^2 -1)(4 \nu^2 - 9)}{2! (8 \kappa)^2} +
 O\left(\frac{1}{\kappa^3} \right) \right],
\end{align*}}
yields that, for any  $\ell\ge 0$, 
{\small
\begin{align*}
 \widehat{f}^{\otimes n}_{\ell}
 =& 1 - \frac{ \ell  n (\ell + 2 \nu) }{2\kappa} +
\frac{  \ell n  (\ell + 2 \nu) (n \ell^2 + 2 n \nu \ell - 2) } {8 \kappa^2}  + O\left(\frac{n^3}{\kappa^3}\right),
\end{align*}
}
and that 

{\small
\begin{align*}
\widetilde{f}^n_{\ell} 
 =& 1 - \frac{ \ell  n (\ell + 2 \nu) }{2\kappa} +
\frac{  \ell n  (\ell + 2 \nu) (n \ell^2 + 2 n \nu \ell - 2) } {8 \kappa^2} + O\left(\frac{n^3}{\kappa^3}\right).
\end{align*}
}
This shows that
$
\widehat{f}^{\otimes n}_{\ell}= \widetilde{f}^n_{\ell} +  O \left( \right(\frac{n}{\kappa}\left)^3 \right)$ 
as $\frac{n}{\kappa}$ tends to zero, for any $\ell\ge 0$,  which concludes the proof.
\end{proof}
Note that the expression \eqref{eq:kappaEqThm} of the equivalent concentration parameter
has been derived in \cite{Chatelain2013} by matching the mean resultant length in the asymptotic case.
Theorem \ref{thm:vMFapprox} shows that a similar result extends to all the Legendre polynomial moments, and thus to
the distribution.

To appreciate the accuracy of the vMF approximation given by Theorem \ref{thm:vMFapprox},
Fig. \ref{fig:asymptconv}
compares the distribution {\em pdf}s and quantiles of the random walk tangent part $t=\bmu^T\bx_n$
with the asymptotic one for $n=10$ steps on the $p=3$ dimensional sphere. As explained in
\cite[p. 168--170]{Mardia}, the {\em pdf} of the tangent part $t$ can be derived from the symmetric
directional {\em pdf}
$f(\bx)= g(\bmu^T\bx)$ on ${\mathbb S}^{p-1}$ as
\begin{align*}
f_p(t) &=
\omega_{p-1} B\left( \frac{p-1}{2},\frac{1}{2} \right)^{-1} g(t) \; (1-t^2)^{\frac{p-3}{2}},
\end{align*}
for all $-1 \le t \le 1$, where $B(\cdot,\cdot)$ is the classical Beta function.
In dimensions $p=3$, the Fourier expansion \eqref{eq:fourierexpRW} leads to
the following simple expression for the projected distribution density,
i.e. the tangent part {\em pdf}
\begin{align*}
f_3(t) &= \sum_{\ell \ge 0 }\frac{2\ell + 1 }{2}\widehat{g}_{\ell}  P_\ell(t),
\end{align*}
for all $-1 \le t \le 1$,
with $\widehat{g}_{\ell}$ the Fourier coefficient of the directional distribution {\em pdf} $g$.
According to \eqref{eq:FourierConvCoeffIdent}
and \eqref{eq:legVMF}, $\widehat{g}_{\ell}=\widehat{f}^{\otimes n}_{\ell}=
\left( \frac{I_{1/2+\ell}(\kappa)}{I_{1/2}(\kappa)}\right)^n$ for the $n$-step random walk on  ${\mathbb S}^{2}$.
This Fourier expansion allows us to numerically evaluate the exact projected {\em pdf}.
The empirical quantiles are estimated from  $10^7$  Monte-Carlo runs.

As expected, Fig. \ref{fig:LOW} shows that for low concentration, the asymptotic
distribution diverges from the real one.
However for high enough concentration, Figs \ref{fig:MEDIUM} and especially \ref{fig:HIGH} 
show the accuracy of the asymptotic approximation. In practice, when $p=3$ the asymptotic distribution
gives a reasonable approximation when $\tilde{\kappa}_n \approx \frac{\kappa}{n} \ge 10$ and is accurate
when $\tilde{\kappa}_n \approx \frac{\kappa}{n} \ge 50$.
\begin{figure}[htbp!]
\centering
\subfigure[$\kappa=10\ (\rho \approx 0.9$)\label{fig:LOW}]{\centering
\includegraphics[width=.5\linewidth]{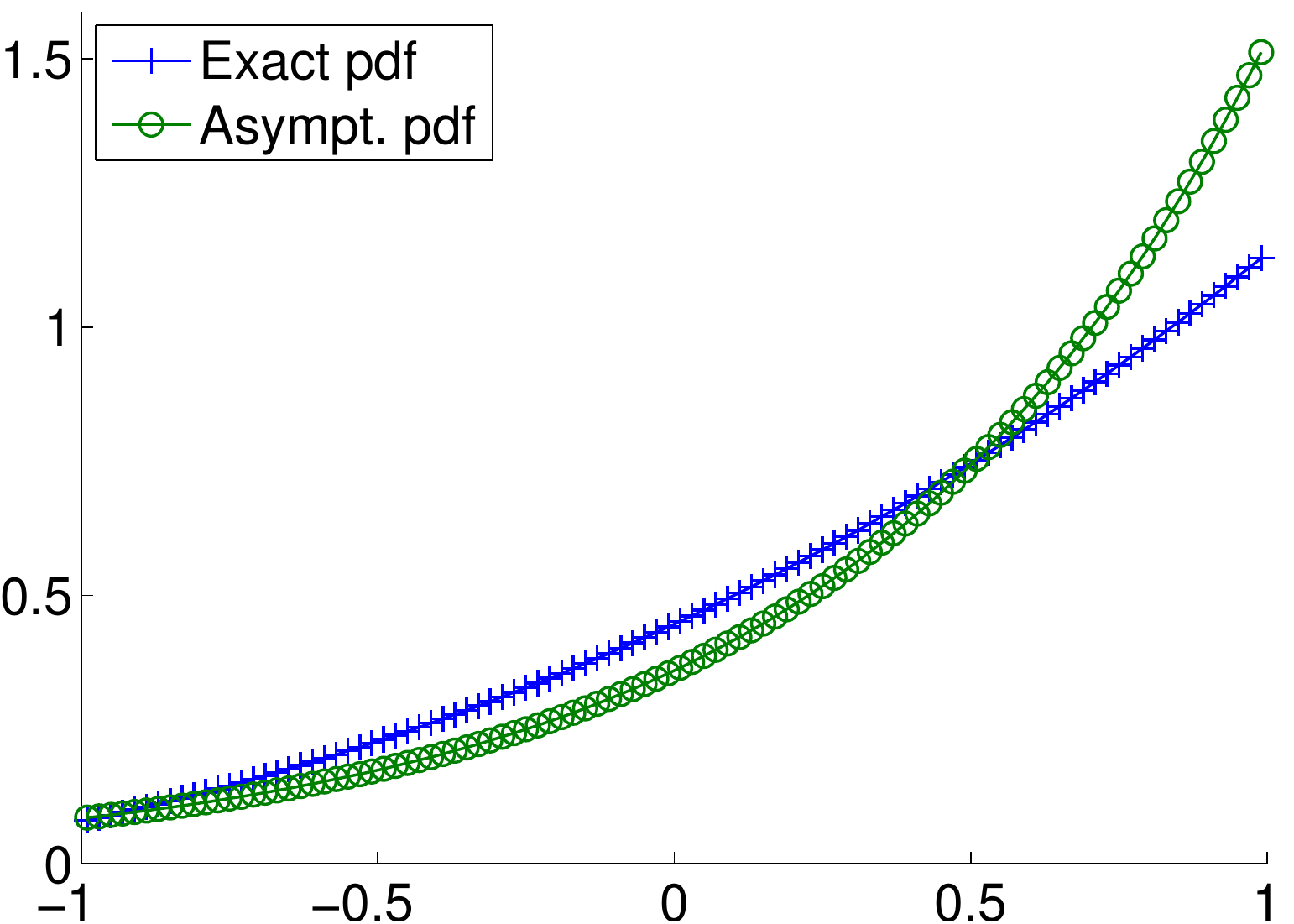}
\hfill
\includegraphics[width=.5\linewidth]{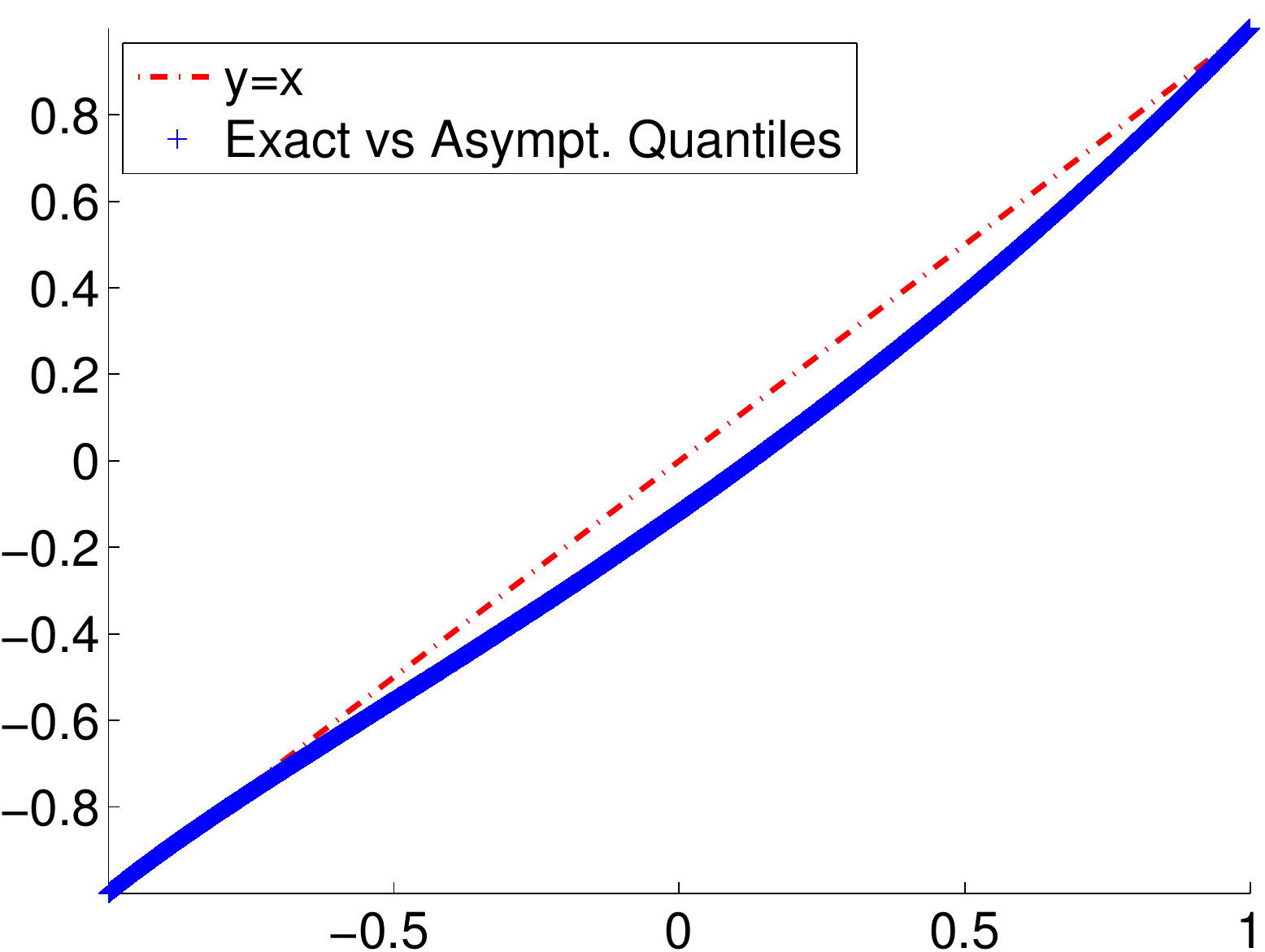}}\\
\subfigure[$\kappa=100\ (\rho \approx 0.99$)\label{fig:MEDIUM}]{\centering
\includegraphics[width=.5\linewidth]{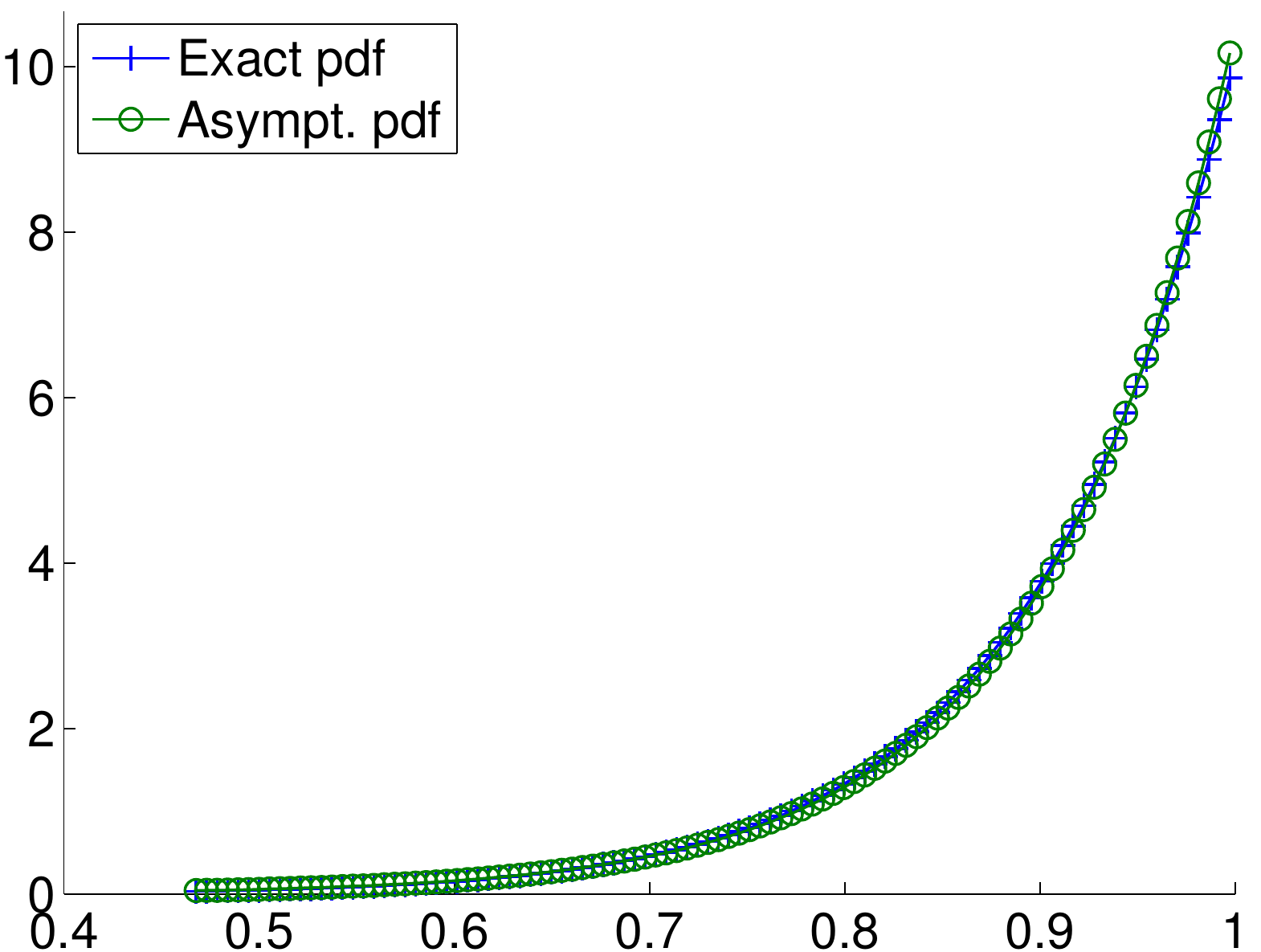}
\hfill
\includegraphics[width=.5\linewidth]{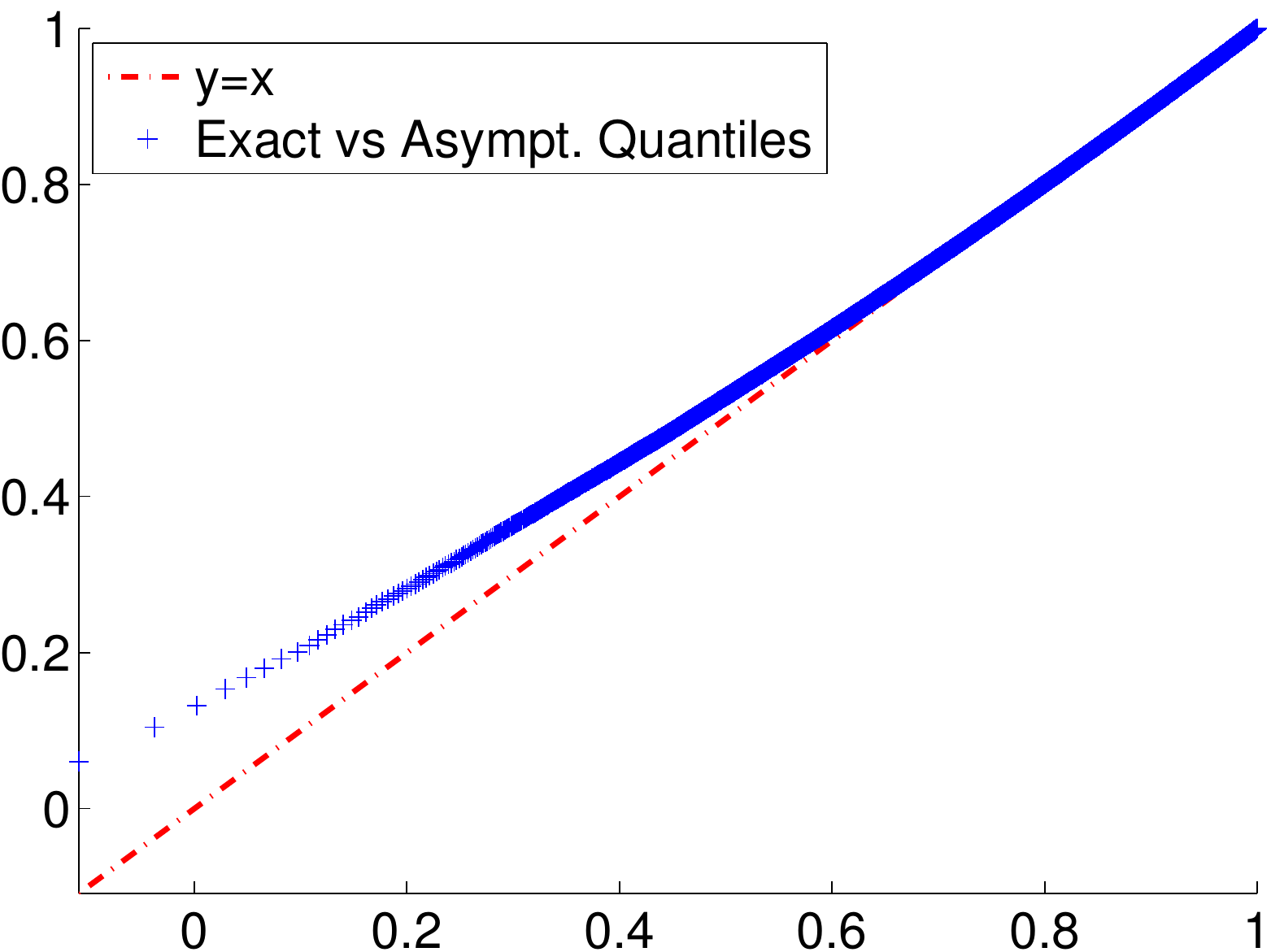}} \\
\subfigure[$\kappa=1000\ (\rho \approx 0.999$)\label{fig:HIGH}]{\centering
\includegraphics[width=.5\linewidth]{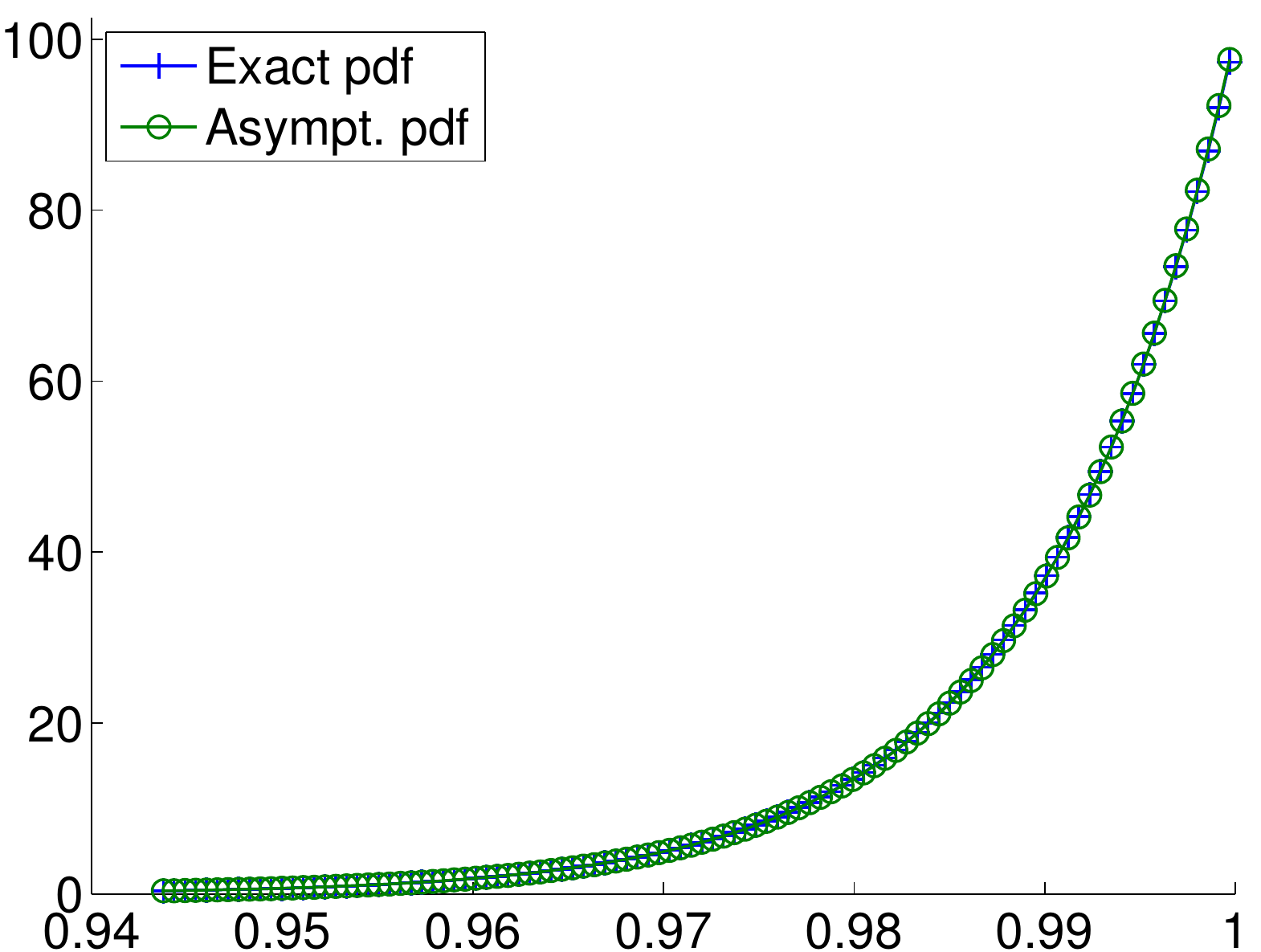}
\hfill
\includegraphics[width=.5\linewidth]{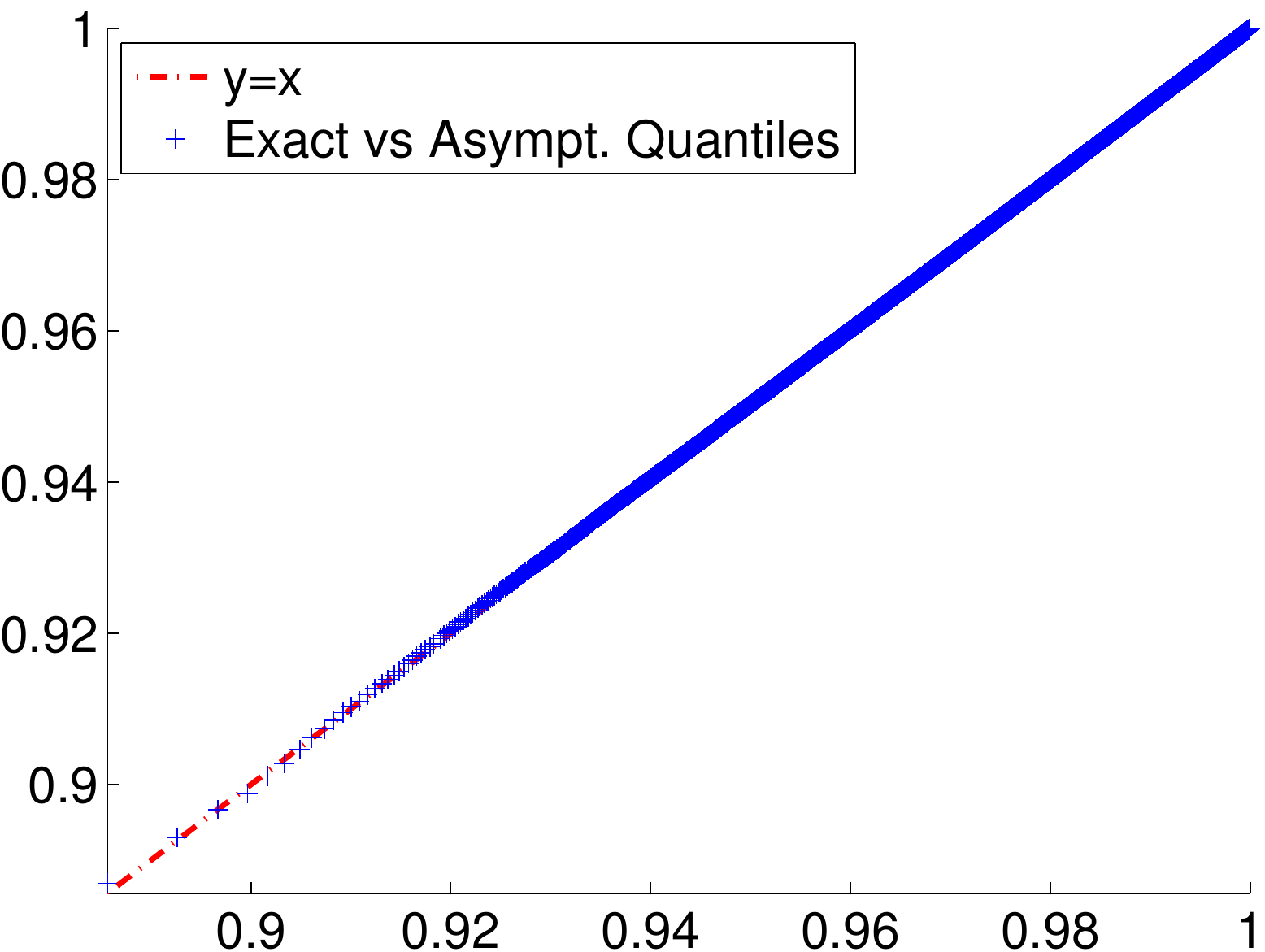}}
\caption{Comparisons of the exact and asymptotic projected distributions for the
$n=10$ steps vMF random
walk on ${\mathbb S}^2$ (i.e. $p=3$) with
concentration parameter $\kappa$.
Top row: $\kappa=10$, middle row: $\kappa=100$, bottom row: $\kappa=1000$.
Left column: {\em pdf}s  for the exact (blue cross) and asymptotic (green circles) distributions.
Right column: empirical qq-plots for the exact distribution quantiles vs the asymptotic
ones ($10^7$ samples).}
\label{fig:asymptconv}

\end{figure}

\subsection{VMF multiple scattering process}

The number of scattering events $N(t)$, which corresponds to the number of steps in the vMF random walk, 
is now assumed to be distributed as a  Cox process with mixing variable $\Lambda(t)$:
$N(t)| \Lambda(t) \sim \mathcal{P}(\Lambda_t)$.

\subsubsection{Asymptotic approximation of the vMF  multiple scattering process}
Based on the vMF asymptotic distribution for the vMF random walk given in Thm. \ref{thm:vMFapprox},  
it is possible to obtain an asymptotic distribution for the vMF scattering process
\begin{prop} \label{prop:asymptScat}
Assume that there exists $a>0$ such that $E[\Lambda(t)^a]< + \infty$.
In the large $\kappa$ case, 
an asymptotic expression for the {\em pdf} of the direction $\bx_t$ given in \eqref{eq:pdfexpScat} 
for the vMF multiple scattering process is given by the following linear mixture
\begin{align}
  \label{eq:AsympScatPdf}
  \tilde{f}(\bx_t ;\bmu, \kappa) &=  {\cal P}_0\left[f_{\Lambda_t}\right] \delta_{\bmu}(\bx_t) + 
  \sum_{n \ge 1}  {\cal P}_n\left[f_{\Lambda_t}\right]  f(\bx_t ; \bmu, \tilde{\kappa}_n ),
\end{align}
where $f(\bx_t ; \bmu, \tilde{\kappa}_n )$ is the {\em pdf}  of the vMF distribution
$M_p(\bmu,\tilde{\kappa}_n)$ given in Thm. \ref{thm:vMFapprox}.\\

Moreover, this yields the following approximation of the Fourier coefficients:
\begin{align*}
 \widehat{h^{\otimes>0}}_{\ell}= \widetilde{h}^{\otimes>0}_{\ell} +  O \left( \left(\frac{1}{\kappa} \right)^{\frac{3a}{a+3}}  \right),
\end{align*}
where $\widehat{h^{\otimes>0}}_{\ell}$ is the $\ell$th-order Fourier coefficient of the continuous part  of the
$\bx_t$ distribution given in \eqref{eq:FourierDiffPos},
while $\widetilde{h}^{\otimes>0}_{\ell}$ denotes the $\ell$th-order Fourier coefficient of the  continuous part of the 
asymptotic distribution.
\end{prop}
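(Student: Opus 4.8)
The plan is to reduce the statement to a quantitative control of the gap between the two continuous-part Fourier coefficients, and then to recover the distributional claim \eqref{eq:AsympScatPdf} by the same method-of-moments argument used in Theorem \ref{thm:vMFapprox}. Both continuous parts are mixtures over the number of scattering events, weighted by ${\cal P}_n[f_{\Lambda_t}]=\Pr(N_t=n)$: the exact one has coefficients $\widehat{h^{\otimes>0}}_{\ell}=\sum_{n\ge1}{\cal P}_n[f_{\Lambda_t}]\,\widehat{f}^{\otimes n}_{\ell}$ from \eqref{eq:FourierDiffPos}, while the asymptotic one has $\widetilde{h}^{\otimes>0}_{\ell}=\sum_{n\ge1}{\cal P}_n[f_{\Lambda_t}]\,\widetilde{f}^n_{\ell}$. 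Subtracting gives
\begin{align*}
 \widehat{h^{\otimes>0}}_{\ell} - \widetilde{h}^{\otimes>0}_{\ell} = \sum_{n \ge 1} {\cal P}_n\left[f_{\Lambda_t}\right] \left( \widehat{f}^{\otimes n}_{\ell} - \widetilde{f}^n_{\ell} \right),
\end{align*}
so the whole task is to bound this series uniformly in $\kappa$.

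The key difficulty is that the per-term estimate $\widehat{f}^{\otimes n}_{\ell} - \widetilde{f}^n_{\ell} = O((n/\kappa)^3)$ supplied by Theorem \ref{thm:vMFapprox} is only useful while $n/\kappa \to 0$; for $n$ of order $\kappa$ or larger it degrades, even though both coefficients remain bounded by $1$ there. I would therefore split the sum at a threshold $N^\ast$ and treat the two ranges separately. On the head $1 \le n \le N^\ast$ I use the uniform bound $|\widehat{f}^{\otimes n}_{\ell} - \widetilde{f}^n_{\ell}| \le C_\ell (n/\kappa)^3$ coming from the Bessel-ratio expansions in the proof of Theorem \ref{thm:vMFapprox}; since the weights sum to at most one, the head contribution is at most $C_\ell (N^\ast/\kappa)^3$. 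On the tail $n > N^\ast$ I use the crude bound $|\widehat{f}^{\otimes n}_{\ell} - \widetilde{f}^n_{\ell}| \le 2$, so the tail contribution is at most $2\,\Pr(N_t > N^\ast)$.

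The role of the hypothesis $E[\Lambda(t)^a] < +\infty$ is precisely to control this tail. As $N_t$ is mixed Poisson with conditional law $\mathrm{Poisson}(\Lambda_t)$, the Poisson factorial-moment (Stirling-number) expansion gives an integer-moment bound which, extended to real $a$ by Lyapunov's inequality, yields $E[N_t^a \mid \Lambda_t] \le C\,(1 + \Lambda_t^a)$ and hence $E[N_t^a] \le C\,(1 + E[\Lambda_t^a]) < +\infty$. Markov's inequality then gives $\Pr(N_t > N^\ast) \le E[N_t^a]/(N^\ast)^a$, so that
\begin{align*}
 \left| \widehat{h^{\otimes>0}}_{\ell} - \widetilde{h}^{\otimes>0}_{\ell} \right| \le C_\ell \left( \frac{N^\ast}{\kappa} \right)^3 + \frac{2\,E[N_t^a]}{(N^\ast)^a}.
\end{align*}
Choosing $N^\ast = \kappa^{3/(a+3)}$ balances the two terms and makes each of order $\kappa^{-3a/(a+3)}$, which is exactly the announced rate; moreover $N^\ast/\kappa = \kappa^{-a/(a+3)} \to 0$, so the head indeed stays within the validity of the per-term estimate.

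For the distributional statement \eqref{eq:AsympScatPdf} I would then argue as in Theorem \ref{thm:vMFapprox}: both the continuous part of the law of $\bx_t$ and the mixture $\sum_{n\ge1}{\cal P}_n[f_{\Lambda_t}]\,M_p(\bmu,\tilde{\kappa}_n)$ are rotationally symmetric, their tangent parts are supported on $[-1,1]$ and hence uniquely determined by their moments, and their Legendre (Fourier) moments coincide up to the error just established; the atom ${\cal P}_0[f_{\Lambda_t}]\,\delta_{\bmu}$ is common to both by \eqref{eq:ftimet}. I expect the main obstacle to be establishing the \emph{uniformity in $n$} of the constant $C_\ell$ in the head estimate, which must be read off from the modified Bessel-function asymptotics rather than merely from the pointwise $n/\kappa \to 0$ limit stated in Theorem \ref{thm:vMFapprox}.
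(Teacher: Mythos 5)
Your proof follows essentially the same route as the paper's: split the mixture series over $n$ at a threshold, control the head with the third-order estimate of Theorem \ref{thm:vMFapprox} (whose uniformity in $n\le N^\ast$ you rightly flag, and which the paper also uses implicitly), bound the tail by $2\Pr(N_t>N^\ast)$, and balance by taking $N^\ast\sim\kappa^{3/(a+3)}$ to get the $O(\kappa^{-3a/(a+3)})$ rate, concluding the distributional claim by the method of moments. The only (harmless) difference is that you bound $\Pr(N_t>N^\ast)$ by transferring the moment hypothesis to $N_t$ and applying Markov there, whereas the paper uses the mixed-Poisson integral representation of this tail probability together with Markov applied to $\Lambda_t$ and Stirling's formula; both yield the same $O((N^\ast)^{-a})$ bound.
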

\begin{proof}
 The asymptotic pdf is obtained by plugging the vMF asymptotic distributions given in Thm. \ref{thm:vMFapprox}  in the 
 multiple scattering mixture {\em pdf}  given in \eqref{eq:ftimet}. To show that the distributions are 
 asymptotically equivalent, it is sufficient to show that their Fourier coefficients are asymptotically equivalent.
  
The expression of the $\ell$th order Fourier coefficient of the continuous part of the density 
 of $\bx_t$ given in \eqref{eq:FourierDiffPos} can be splitted in two terms:
 \begin{align}
  \widehat{h^{\otimes>0}}_{\ell} &=  \sum_{n = 1}^{m_{\kappa}} {\cal P}_n\left[f_{\Lambda_t}  \right] \widehat{f}^{\otimes n}_{\ell} + 
  \sum_{n > m_{\kappa}} {\cal P}_n\left[f_{\Lambda_t}  \right] \widehat{f}^{\otimes n}_{\ell} ,
 \label{eq:sumsplit}
 \end{align}

According to Thm \ref{thm:vMFapprox}, 
$\widehat{f}^{\otimes n}_{\ell}= \widetilde{f}^n_{\ell} +  O \left( \left(\frac{m_{\kappa}}{\kappa}\right)^3 \right),$ 
for all $ n \le m_{\kappa}$, 
thus the finite sum in the right-hand side of \eqref{eq:sumsplit} can be dominated as 
\begin{align}
\label{eq:firstRHS}
 \sum_{n = 1}^{m_{\kappa}} {\cal P}_n\left[f_{\Lambda_t}  \right] \widehat{f}^{\otimes n}_{\ell} &= 
 \sum_{n = 1}^{m_{\kappa}} {\cal P}_n\left[f_{\Lambda_t}  \right] \widetilde{f}^n_{\ell} + \sum_{n = 1}^{m_{\kappa}} {\cal P}_n\left[f_{\Lambda_t} \right] \times  O \left( \left(\frac{m_{\kappa}}{\kappa}\right)^3 \right), \nonumber \\
 &= \sum_{n = 1}^{m_{\kappa}} {\cal P}_n\left[f_{\Lambda_t}  \right] \widetilde{f}^n_{\ell} + O\left( \left(\frac{m_{\kappa}}{\kappa}\right)^3 \right),
\end{align}
since $\sum_{n = 1}^{m_{\kappa}} {\cal P}_n\left[f_{\Lambda_t} \right] \le \sum_{n \ge 1} {\cal P}_n\left[f_{\Lambda_t} \right] \le 1$.

The second series in the right-hand side of \eqref{eq:sumsplit} can be expressed as
\begin{align}
\label{eq:secondRHS}
\sum_{n > m_{\kappa}} {\cal P}_n\left[f_{\Lambda_t}  \right] \widehat{f}^{\otimes n}_{\ell}&= 
\sum_{n > m_{\kappa}} {\cal P}_n\left[f_{\Lambda_t}  \right] \widetilde{f}^{n}_{\ell} + 
\sum_{n > m_{\kappa}} {\cal P}_n\left[f_{\Lambda_t}  \right] \left( \widehat{f}^{\otimes n}_{\ell} - \widetilde{f}^{n}_{\ell} \right) ,
\end{align}
with
\begin{align*}
\left|\sum_{n > m_{\kappa}} {\cal P}_n\left[f_{\Lambda_t}  \right] \left( \widehat{f}^{\otimes n}_{\ell} - \widetilde{f}^{n}_{\ell} \right)  \right| \le  \sum_{n > m_{\kappa}} {\cal P}_n\left[f_{\Lambda_t}  \right] \left( \left| \widehat{f}^{\otimes n}_{\ell}\right| + \left| \widetilde{f}^{n}_{\ell} \right| \right)
 \le 2 \sum_{n > m_{\kappa}} {\cal P}_n\left[f_{\Lambda_t}  \right]= 2 \Pr{ \left( N_t > m_{\kappa} \right) }, 
\end{align*}
where $N_t$ is the Cox process counting the scattering events whose mixing intensity variable is $\Lambda_t$. A classical result about mixed Poisson distribution 
\cite{karlis2005}
yields that
\begin{align*}
 \Pr{ \left( N_t > m_{\kappa} \right) } &= \int_0^{+\infty} e^{-\lambda_t} \frac{\lambda_t^{m_{\kappa}}}{ m_{\kappa}! } 
 \Pr{ \left( \Lambda_t > \lambda_t  \right) } d\lambda_t.
\end{align*}
Because 
$\Lambda_t$ is a positive random variable, the Markov inequality ensures that for $a>0$
\begin{align*}
 \Pr{ \left( \Lambda_t > \lambda_t  \right) }& \le \frac{
E[\Lambda_t^a]}{\lambda_t^a}, \qquad \textrm{ for all } \lambda_t>0.
\end{align*}
As a consequence, 
\begin{align*}
\Pr{ \left( N_t > m_{\kappa} \right) } & \le E[\Lambda_t^a] \int_0^{+\infty} e^{-\lambda_t} \frac{\lambda_t^{m_{\kappa}-a}}{ m_{\kappa}! } d\lambda_t 
= E[\Lambda_t^a] \frac{\Gamma\left( m_{\kappa}-a +1 \right)}{\Gamma\left( m_{\kappa} +1 \right)},
\end{align*}
thus for large $m_{\kappa}$,  $\Pr{ \left( N_t > m_{\kappa} \right) } = O\left( \left(\frac{1}{m_{\kappa}}\right)^a\right)$, according to both the Stirling formula and the assumption that 
$E[\Lambda_t^a]<+\infty$. Eq. \eqref{eq:secondRHS} can then be rewritten as
\begin{align}
\label{eq:secondRHSbis}
\sum_{n > m_{\kappa}} {\cal P}_n\left[f_{\Lambda_t}  \right] \widehat{f}^{\otimes n}_{\ell}&= 
\sum_{n > m_{\kappa}} {\cal P}_n\left[f_{\Lambda_t}  \right] \widetilde{f}^{n}_{\ell} + O\left( \left(\frac{1}{m_{\kappa}}\right)^a\right). 
\end{align}
Plugging now \eqref{eq:firstRHS} and \eqref{eq:secondRHSbis} in \eqref{eq:sumsplit} yields that
\begin{align*}
  \widehat{h^{\otimes>0}}_{\ell} &= \widetilde{h}^{\otimes >0}_{\ell}
  + O\left( \left(\frac{m_{\kappa}}{\kappa}\right)^3\right) +
  O\left( \left(\frac{1}{m_{\kappa}}\right)^a\right), 
\end{align*}  
where $\widetilde{h}^{\otimes >0}_{\ell} = \sum_{n \ge 1}  {\cal P}_n \left[f_{\Lambda_t}  \right]  \widetilde{f}_{\ell}^n$ 
is the $\ell$th order Fourier coefficient of $\sum_{n \ge 1}  {\cal P}_n\left[f_{\Lambda_t}\right]  f(\bx_t ; \bmu, \tilde{\kappa}_n )$
due to the orthogonality of the Legendre polynomials.
Finally, setting $m_{\kappa}= \lfloor \kappa^{\gamma} \rfloor$ with $\gamma=\frac{3}{3+a} \in (0,1)$ gives the expected result. This concludes the proof.
\end{proof}
Note that when all the moments of the intensity variable $\Lambda_t$ exist - this is the case for the 
Poisson process ($\Lambda_t$ is deterministic) or the negative binomial process ($\Lambda_t$ obeys a Gamma distribution) - the 
asymptotic approximation given in  Prop. \ref{prop:asymptScat} yields an almost third order approximation of the Fourier coefficient for high
concentration parameter $\kappa$.

From a practical point of view, it is also interesting  to note that the asymptotic {\em pdf } approximation
given in Prop. \ref{prop:asymptScat} 
may give a numerically simpler way to evaluate the {\em pdf} in the vMF case than the Fourier expansion 
\eqref{eq:pdfexpScat}. In fact, this mixture expression gives a series expansion with positive terms and weights. 
In addition, the mixture model offers a very simple way to draw some random variables 
asymptotically distributed according to the vMF multiple scattering process for large enough concentration parameter $\kappa$.

\subsubsection{Estimation bounds}

From the estimation theory  perspective, it is of special interest to quantify 
the amount of information that the observed process carries about its distribution. 
In a parametric framework, the information about 
the parameters that govern the distribution is measured by the Fisher information matrix. 
Inverting the Fisher information  
provides now the Cramer-Rao lower bound (CRLB). This is 
a lower bound on the variance of any unbiased estimators of
the parameters to be estimated. In addition,  standard maximum likelihood 
estimators (MLEs) are known to be asymptotically unbiased and efficient under 
mild regularity conditions. This means that their large sample asymptotic variances approximately equal 
the CRLB.

The {\em pdf} expression of the multiple scattering process given in Prop. \ref{prop:FourierScat}, 
makes now possible to compute the Fisher information, and therefore
the CRLB. However the Fisher information matrix requires to determine the covariance 
of the first-order derivatives with respect to the process parameters 
(or equivalently the negative expectations of second-order derivatives)
of the log {\em pdf}. Closed-form expressions for both the Fisher information or the CRLB 
are difficult to obtain as this log {\em pdf} has no simple tractable
expression. 
In such situation, it is very usual to approximate the expectations 
by using Monte Carlo methods. This allows one to numerically evaluate the 
CRLB based on the {\em pdf} Fourier expansion \eqref{eq:pdfexpScat} for the vMF 
multiple scattering process. Moreover, in the 
high concentration case, i.e. for large $\kappa$, the vMF
mixture representation given in Prop. \ref{prop:asymptScat} yields another 
simple way to approximate the CRLB with Monte Carlo methods.

The vMF concentration parameter $\kappa>0$ and the mean
resultant length $\rho \in (0,1)$ of the vMF random steps 
are related by a one to one transformation $\rho=A_p(\kappa)$ given in \eqref{eq:rhoVMF}. 
Therefore the random steps are reparametrized in the remainder by the scalar 
$\rho$. This yields a simple interpretation:
the closer is $\rho$ to $1$, the more concentrated is the distribution, the closer 
to $0$, the more uniform distribution.
The CRLB for the parameter $\rho$ reduces to 
$\frac{ \left[A_p'(\kappa) \right]^2} {I(\kappa)}$ where $I(\kappa)$ is the 
Fisher information for $\kappa$.

When the number $N(t)$ of scattering events is a Poisson process, 
the intensity variable of the compound Cox process $\Lambda_t$ is a deterministic
value $\Lambda_t=\lambda_t$.
The distribution of the multiple scattering process $\bx_t$ is then parametrized by the
vector $\left( \rho,\lambda_t \right)$. Figs \ref{fig:PoissRhovsKappa} and  \ref{fig:PoissLvsKappa}
depict the CRLBs for the parameters  $\rho$ 
and $\lambda_t$  respectively as a function of 
$\kappa=A_p^{-1}(\rho)$ when the dimension is $p=3$. The value of the Poisson
intensity is set to $\lambda_t=10$. One can see that the CRLBs quickly increase 
when the concentration decreases. In fact, in the limit case where 
$\rho$ tends to zero (or equivalently, $\kappa$ tends to zero), the distribution of 
$\bx_t$ converges to an uniform distribution  on ${\mathcal S}^{p-1}$ and the model is not  
identifiable anymore.
For larger values of $\rho$ (i.e. for high $\kappa$), good estimation performances
can be reached. Moreover there is little gain possible for the parameter
$\lambda_t$ that governs the number $N(t)$ of scattering events
when $\rho$ continues to 
converge to $1$. Finally, one can see that the CRLBs computed for the 
high  concentration asymptotic distribution of $\bx_t$ given in Prop. \ref{prop:asymptScat}
are in good agreement with the exact one when $\rho$ is close to $1$.

\begin{figure}[htbp!]
\centering
\subfigure[CRLBs for $\rho$\label{fig:PoissRhovsKappa}] {
        \includegraphics[width=.45\linewidth]{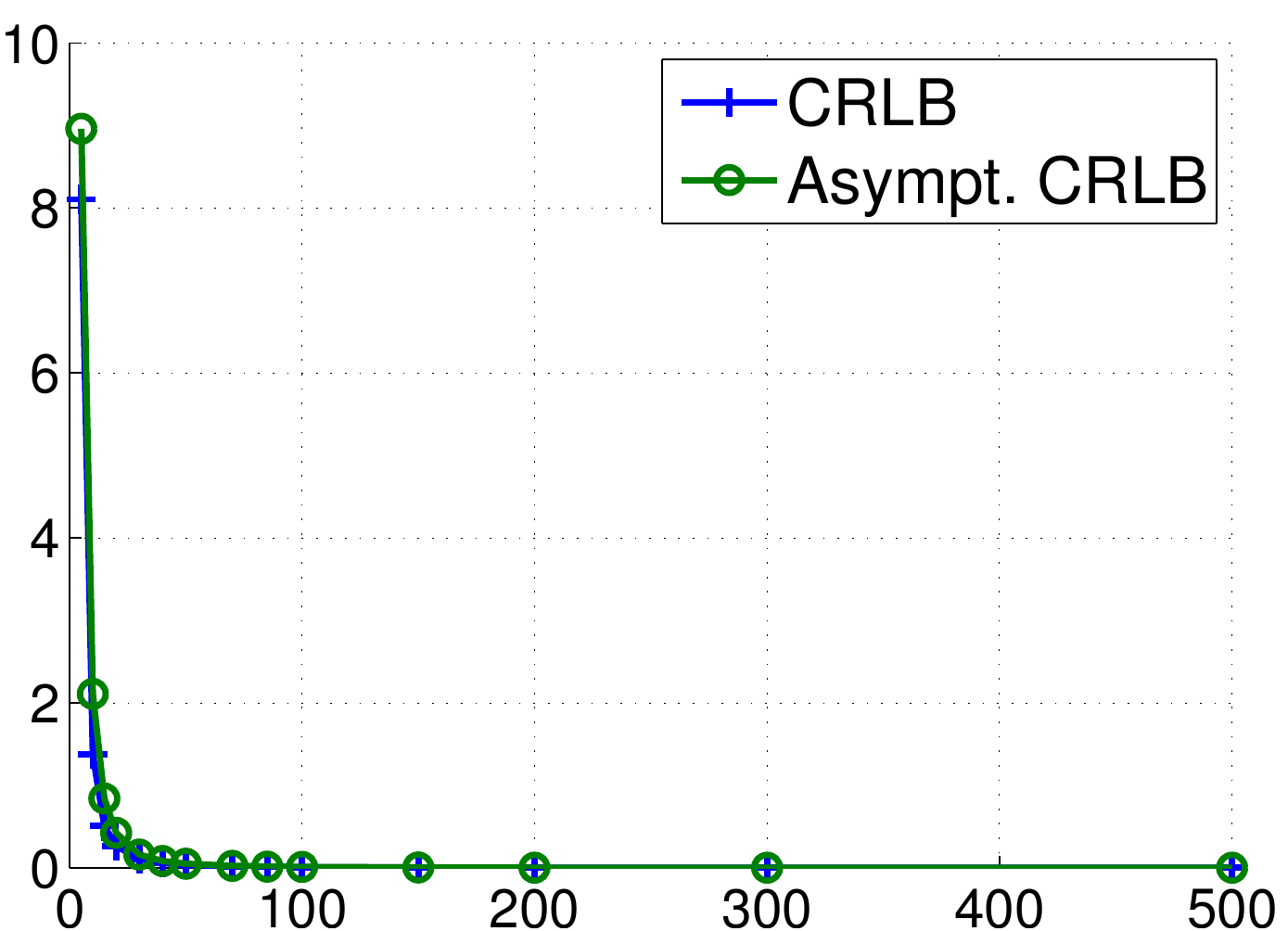}
}
\subfigure[CRLBs for $\lambda_t$\label{fig:PoissLvsKappa}]{
        \includegraphics[width=.48\linewidth]{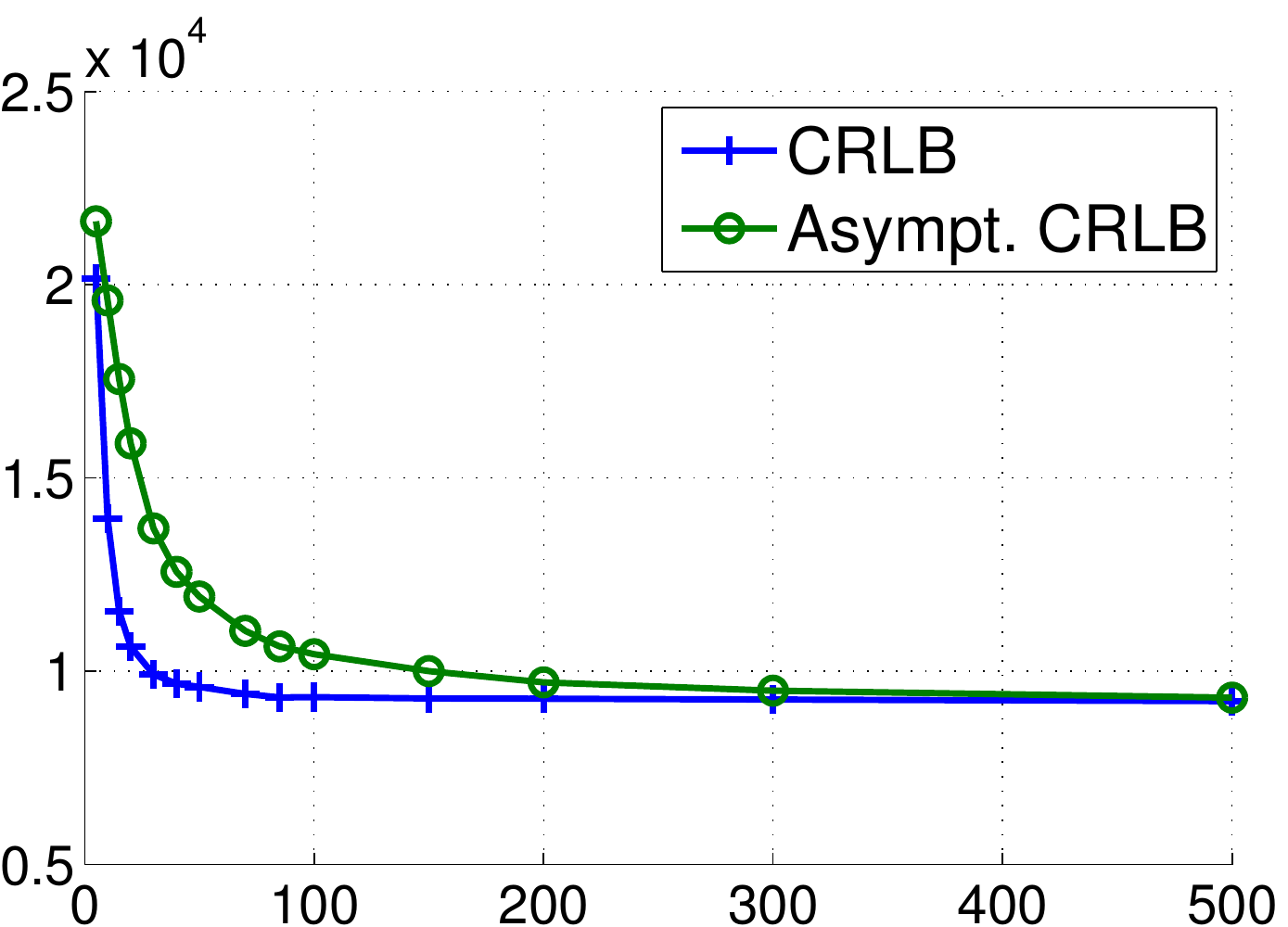}
}
\caption{CRLBs vs $\kappa=A_p^{-1}(\rho)$ for the Poisson multiple scattering
process ($\lambda_t=10$, $p=3$).
Blue curve:  exact distribution of $\bx_t$. Green curve: high concentration 
asymptotic distribution of $\bx_t$.)
\label{fig:PoissCRLBvsKappa}}
\end{figure}

Similar results are shown in Fig. \ref{fig:PoissCRLBvsL} when
the mean resultant length is set to a fixed value
$\rho= 0.99$ (i.e. $\kappa \approx 100$) while the intensity $\lambda_t$ 
of the Poisson process varies. Fig. \ref{fig:PoissRhovsL} shows that 
for too high $\lambda_t$, the model becomes 
hardly identifiable (in this case the distribution of $\bx_t$ converges to 
an uniform distribution). When $\rho$ is fixed and for small enough $\lambda_t$, 
the high  concentration asymptotic CRLB  is in good agreement with the exact one.

\begin{figure}[htbp!]
\centering
\subfigure[CRLBs for $\rho$\label{fig:PoissRhovsL}] {
        \includegraphics[width=.45\linewidth]{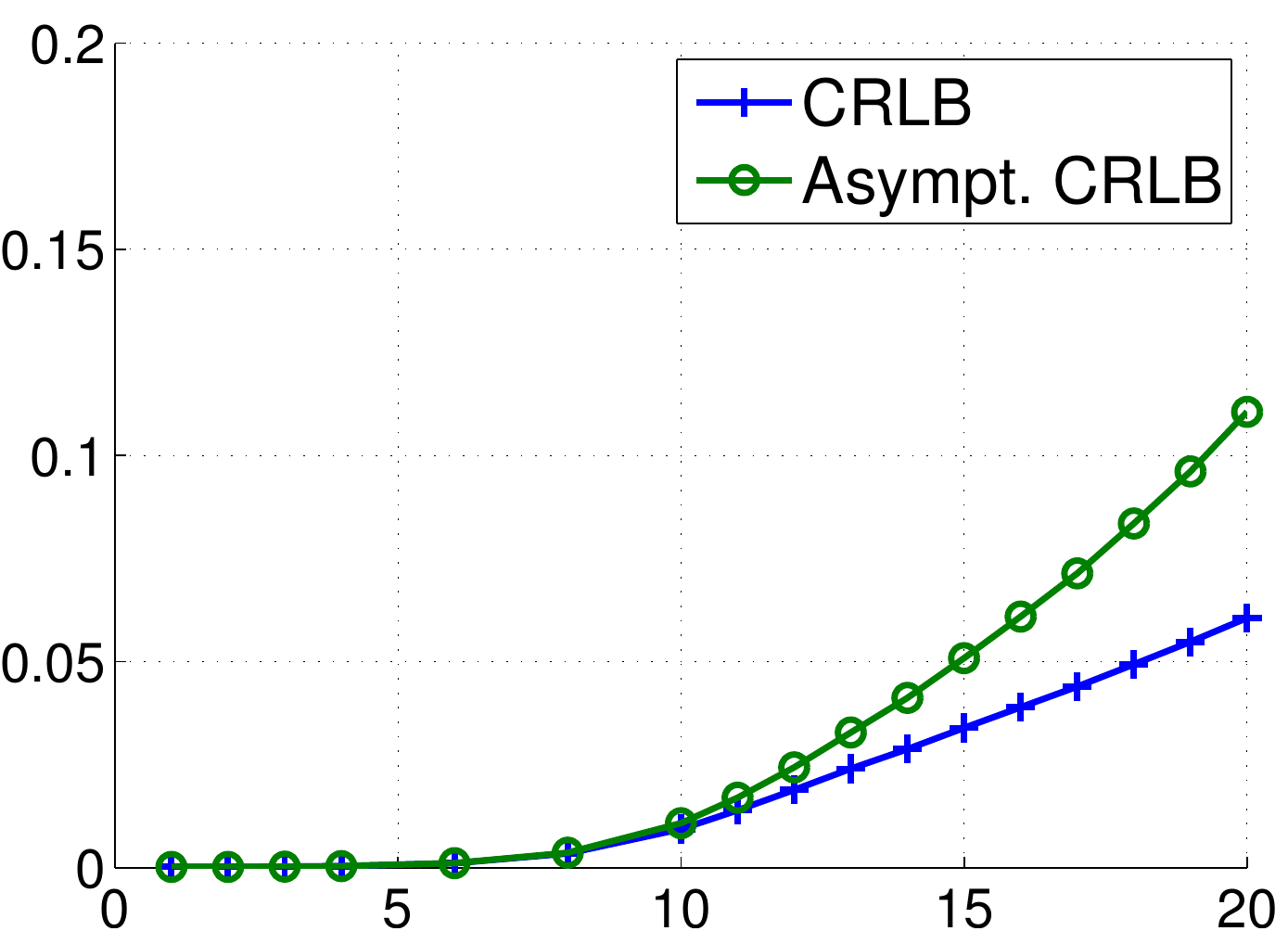}
}
\subfigure[CRLBs for $\lambda_t$\label{fig:PoissLvsL}]{
        \includegraphics[width=.48\linewidth]{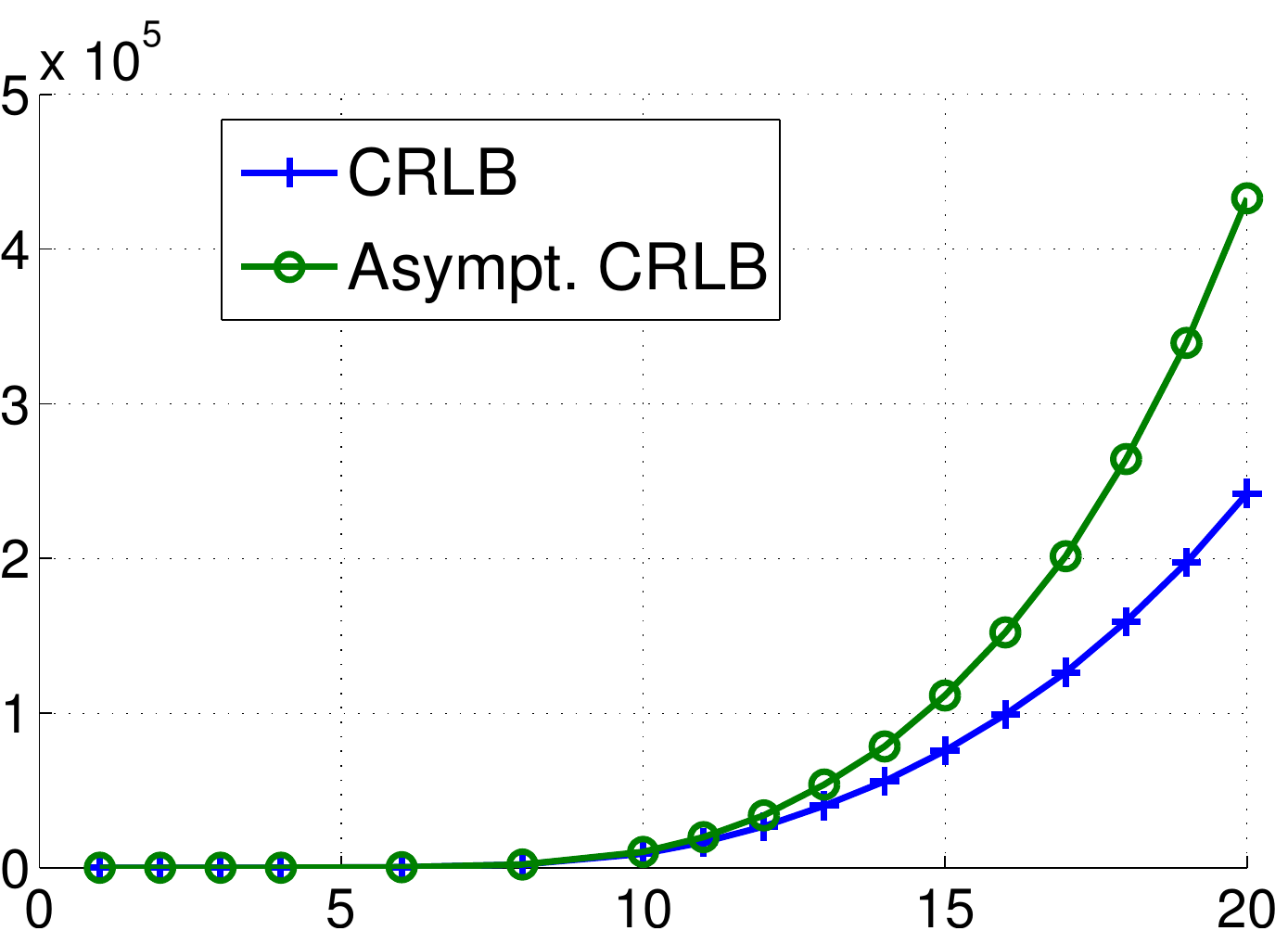}
}
\caption{CRLBs vs $\lambda_t$ for the Poisson multiple scattering
process ($\rho=0.99$).
Blue curve:  exact distribution of $\bx_t$. Green curve: high concentration 
asymptotic distribution of $\bx_t$.)
\label{fig:PoissCRLBvsL}}
\end{figure}

To conclude, we consider the more general case where 
the intensity is distributed as a Gamma process $\Lambda_t \sim \mathcal{G}(\xi_t,\theta)$,  
thus $N(t)$ is a Negative Binomial process paramaterized by the vector 
$(\rho,\theta,\xi_t)$. Figs \ref{fig:NBRhovsKappa}), \ref{fig:NBThetavsKappa} and
\ref{fig:NBXiLvsKappa} depict the CRLBs for the parameters  $\rho$, $\theta$, 
and $\xi_t$ respectively as a function of $\kappa=A_p^{-1}(\rho)$ when the dimension is $p=3$. 
The parameters are set to $\theta=1$ and $\xi_t=10$.
Similar conclusoins to the one reported 
for Fig. \ref{fig:PoissCRLBvsKappa} can be drawn. 

The CRLBs presented for both Poisson and Negative Binomial cases outline the accuracy of the multiple scattering process model when the number of scattering events is low, which indicates its appropriateness and usefulness in forward scattering regimes occuring well before the full diffusion regime.

\begin{figure}[htbp!]
\centering
\subfigure[CRLBs for $\rho$\label{fig:NBRhovsKappa}] {
        \includegraphics[width=.45\linewidth]{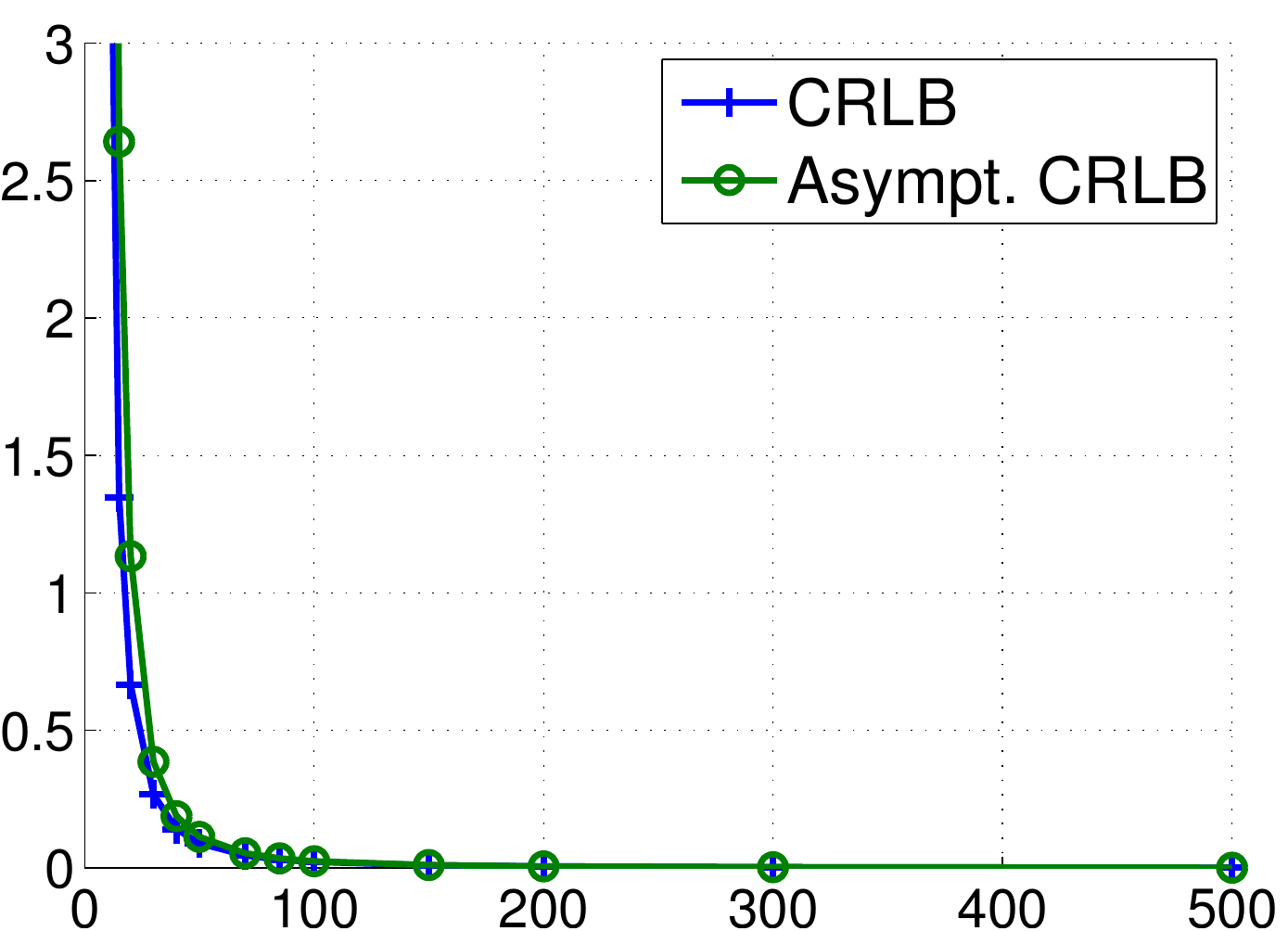}
}
\subfigure[CRLBs for $\theta$\label{fig:NBThetavsKappa}]{
        \includegraphics[width=.48\linewidth]{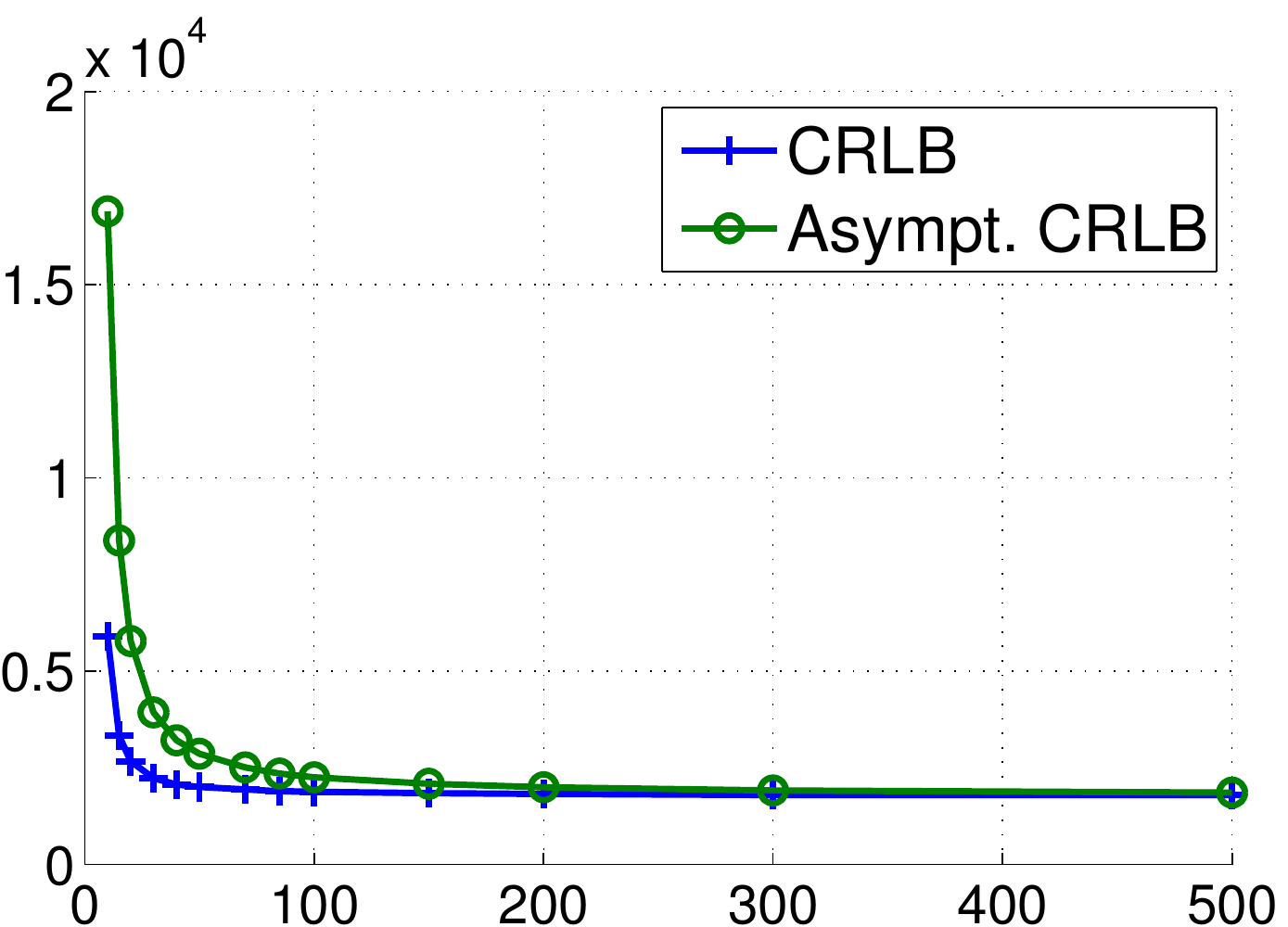}
}
\subfigure[CRLBs for $\xi_t$\label{fig:NBXiLvsKappa}]{
        \includegraphics[width=.48\linewidth]{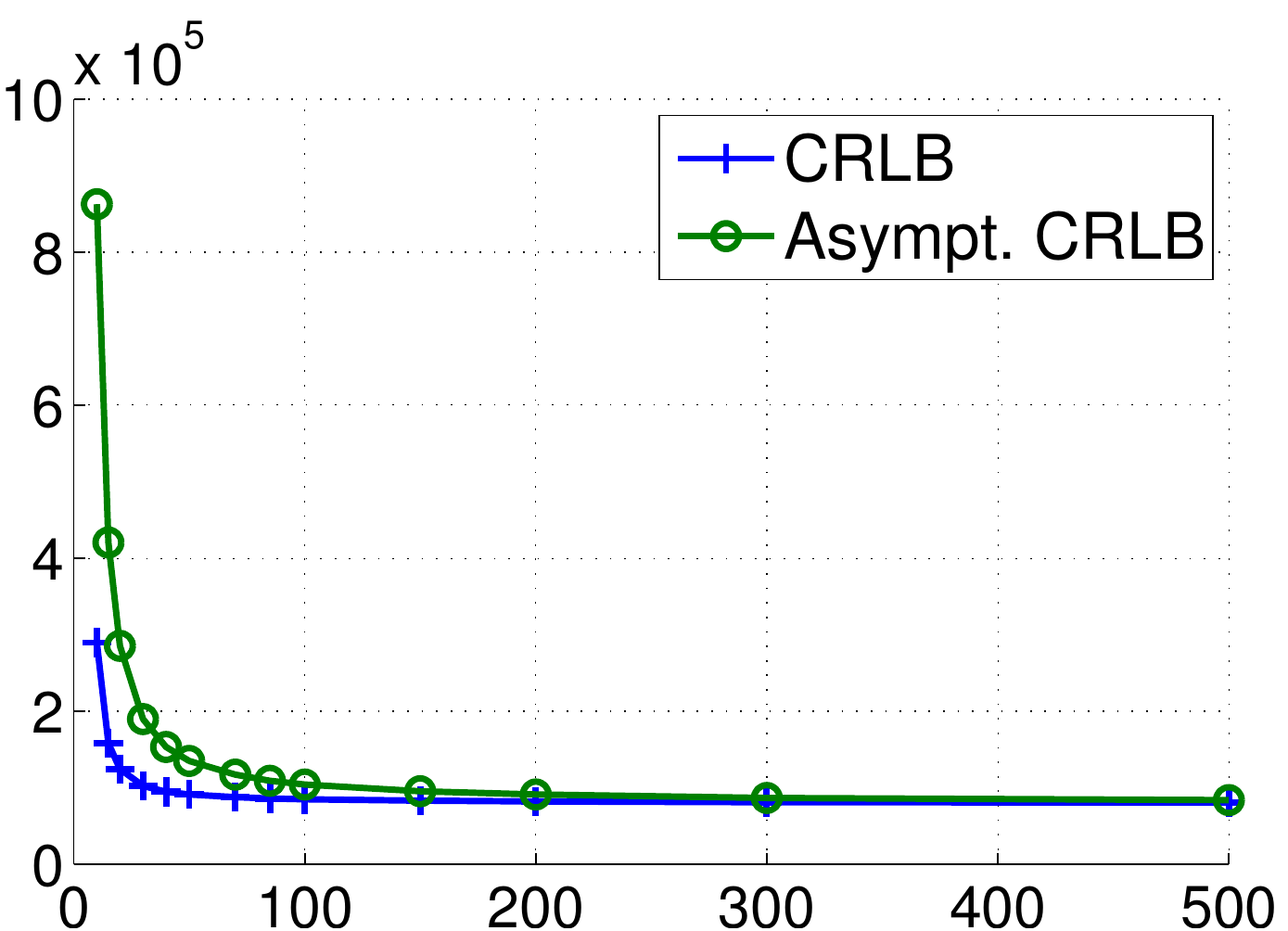}
}
\caption{CRLBs vs $\kappa=A_p^{-1}(\rho)$ for the Negative Binomial 
multiple scattering process ($\theta=1$, $\xi_t=10$, $p=3$).
Blue curve:  exact distribution of $\bx_t$. Green curve: high concentration 
asymptotic distribution of $\bx_t$.
\label{fig:NBCRLBvsKappa}}
\end{figure}

\section{Conclusion}
In this paper, we have studied multiple isotropic random walks and scattering processes on hyperspheres and obtained Fourier expansions for their {\em pdf}s. The case where the random steps follow a von Mises Fisher law on ${\mathbb S}^{p-1}$ has been detailed and asymptotic approximation for multi-convoltion of such densities have been introduced. The obtained expressions allow to numerically compute lower estimation bounds for the parameters of multiple scattering processes. These bounds should be of interest for future studies of estimation techniques (Method of moments, Bayesian, etc.) for such processes. The abundance of multiple scattering situations in engineering applications should provide applications for the presented results.      

\section*{Acknowledgments}
The authors are grateful to Pr. Peter Jupp for fruitful discussions
regarding directional statistics and convolution on double coset 
spaces.
\appendix
\subsection{Proof of the unimodality of the convolution of unimodal and rotationally symmetric distributions}
\label{app:uni}
Let $f_1$ and $f_2$ be the {\it pdf}s of two absolutely continuous  unimodal and rotationally symmetric distributions on
${\mathbb S}^{p-1}$ with the same mode $\bmu \in {\mathbb S}^{p-1}$.  These  {\it pdf}s express as continuous functions of the only
cosine $\bx^T \bmu$, that is $f_1(\bx;\bmu) = g_1(\bx^T \bmu)$ and $f_2(\bx;\bmu) = g_2(\bx^T \bmu)$. Moreover, due
to the unimodality property, $g_1$ and $g_2$ are increasing functions from $[-1,1]$ to $\mathbb{R}^+$.
As explained in \ref{sec:coset}, $f_1$ and $f_2$  belong
to the double coset space $L^1\left( SO(p-1)\backslash SO(p)/ SO(p-1), \mathbb{R} \right)$ where $SO(p-1)$ stands for
the rotation subgroup such that the axis defined by the unit vector $\bmu$ is left invariant.
This space is stable by convolution, and
the resulting convolved  {\it pdf} is rotationally symmetric about $\bmu$ and reads
\begin{align*}
   f(\bx)&=
   \left( f_2 \star_{\bmu} f_1 \right) ({\bx}) =
   \left(f_2 \star f_1\right) ({\bx}^T\bmu) \\
& = \int_{{\mathbb S}^{p-1}} g_2({\bx}^T{\bx_1})g_1({\bx_1}^T\bmu) d\bx_1.
\end{align*}
We want to show now that the convolved distribution
is unimodal with mode $\bmu$.

The sketch of the proof is inspired by the proof given in \cite{Purk98} of the equivalent
property on the real line.
Note first that the convolved {\it pdf} can be expressed as the following expectation:
\begin{align}
   f(\bx)
& = \int_{{\mathbb S}^{p-1}} g_2({\bx}^T{\bx_1})g_1({\bx_1}^T\bmu) d\bx_1,\nonumber \\
& = E \left[ g_2(\bX_1^T\bx)\right], \label{eq:uniesp}
\end{align}
where $\bX_1$ is a random vector in ${\mathbb S}^{p-1}$ with  {\it pdf} $f_1$.
Since  $g_2(\bX_1^T \bx)$ is a continuous positive variable, one gets
\begin{align}
E \left[ g_2(\bX_1^T\bx)\right]& =
\int_0^{+\infty} \Pr ( g_2(\bX_1^T\bx) \ge u ) du,\\\label{eq:unipos}
\end{align}
Furthermore,
since $g_2$ is continuous and increasing on $(-1,1)$, for all $u$ in the image of $g_2$
\begin{align*}
\Pr\left( g_2\left(\bX_1^T \bx \right) \ge u  \right) &=
\Pr\left(\bX_1^T \bx \ge \delta  \right),\\
&= \int_{\bx_1 \in {\mathbb S}^{p-1} \cap  \bx^T \bx_1 \ge \delta} g_1 \left(\bmu^T \bx_1\right) d\bx_1.
\end{align*}
where $\delta= g_2^{-1}(u) \in (-1,1)$. In the remainder, since
we only consider unit vectors, the hypersphere constraint $\bx_1 \in {\mathbb S}^{p-1}$ on the integration domain
will be omitted to simplify the notation.

The normal-tangent decomposition along the axis $\bmu$ yields now that
$\bx = t\bmu + \sqrt{1-t^2} \bxi$ where
$t \in [-1,1]$ and $\bxi \in \bmu^{\bot} \cap S^{p-1}$.

\begin{lemma}
 \label{lemma1}
For any fixed  $\bxi \in \bmu^{\bot} \cap S^{p-1}$ and $\delta \in (-1,1)$,
the function $h$ defined as
\begin{align*}
   t \in [-1,1] \mapsto h(t)= \int_{\bx^T_t \bx_1 \ge \delta} g_1\left( \bmu^T \bx_1 \right)  d\bx_1,
\end{align*}
where $\bx_t= t\bmu + \sqrt{1-t^2} \bxi$ is increasing.
\end{lemma}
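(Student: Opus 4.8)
The plan is to prove the monotonicity by a reflection (symmetrization) argument, the hyperspherical analogue of the real-line technique of \cite{Purk98}. Fix $t_1 < t_2$ in $[-1,1]$; I want to show $h(t_1) \le h(t_2)$. Writing $C_i = \{ \bx_1 \in {\mathbb S}^{p-1} : \bx_{t_i}^T \bx_1 \ge \delta \}$ for the two integration domains, note that since $\| \bx_{t_1} \| = \| \bx_{t_2} \| = 1$ and the threshold $\delta$ is common, $C_1$ and $C_2$ are spherical caps of \emph{identical} angular radius $\arccos \delta$, centred at $\bx_{t_1}$ and $\bx_{t_2}$ respectively. Because the contributions over $C_1 \cap C_2$ cancel, the difference reduces to a comparison over the symmetric difference:
\begin{align*}
 h(t_2) - h(t_1) = \int_{C_2 \setminus C_1} g_1(\bmu^T \bx_1)\, d\bx_1 - \int_{C_1 \setminus C_2} g_1(\bmu^T \bx_1)\, d\bx_1.
\end{align*}

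Next I would introduce the reflection $\si$ of ${\mathbb R}^p$ across the hyperplane $H = \{ \bx : \bn^T \bx = 0 \}$, where $\bn = (\bx_{t_1} - \bx_{t_2})/\| \bx_{t_1} - \bx_{t_2} \|$. The crucial point is that $H$ passes through the origin precisely because the two centres have equal norm, so $\si$ is an orthogonal involution and therefore preserves ${\mathbb S}^{p-1}$ together with its surface measure. A short computation using $\| \bx_{t_1} \| = \| \bx_{t_2} \| = 1$ gives $\si(\bx_{t_1}) = \bx_{t_2}$, so $\si$ interchanges the equal-radius caps, $\si(C_1) = C_2$, and hence maps $C_1 \setminus C_2$ bijectively and measure-preservingly onto $C_2 \setminus C_1$. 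Changing variables $\bx_1 \mapsto \si(\bx_1)$ in the first integral then yields
\begin{align*}
 h(t_2) - h(t_1) = \int_{C_1 \setminus C_2} \left[ g_1\!\left( \bmu^T \si(\bx_1) \right) - g_1\!\left( \bmu^T \bx_1 \right) \right] d\bx_1.
\end{align*}

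It remains to show the integrand is nonnegative, where the monotonicity of $g_1$ does the work provided $\bmu^T \si(\bx_1) \ge \bmu^T \bx_1$ on $C_1 \setminus C_2$. Since $\si(\bx_1) = \bx_1 - 2(\bn^T \bx_1)\bn$, one has $\bmu^T \si(\bx_1) - \bmu^T \bx_1 = -2 (\bn^T \bx_1)(\bmu^T \bn)$. Here $\bmu^T \bn$ is proportional to $\bmu^T(\bx_{t_1} - \bx_{t_2}) = t_1 - t_2 < 0$, while any $\bx_1 \in C_1 \setminus C_2$ is strictly closer to $\bx_{t_1}$ than to $\bx_{t_2}$, i.e. $\bn^T \bx_1 > 0$; the product is thus negative and $\bmu^T \si(\bx_1) \ge \bmu^T \bx_1$, so $g_1(\bmu^T \si(\bx_1)) \ge g_1(\bmu^T \bx_1)$ and the difference is nonnegative. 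I expect the main obstacle to be geometric bookkeeping rather than any hard analysis: one must verify carefully that the equal norms of the centres force $H$ through the origin (so that $\si$ is a genuine orthogonal, measure-preserving map of the sphere), and that $C_1 \setminus C_2$ lies entirely on the side $\bn^T \bx_1 > 0$. Once these two facts are pinned down, the sign computation together with the monotonicity of $g_1$ closes the argument.
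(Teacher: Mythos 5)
Your proof is correct and rests on the same symmetrization idea as the paper's: write the difference $h(t_2)-h(t_1)$ as an integral over the symmetric difference of the two equal-radius caps and transport one piece onto the other by an isometry of ${\mathbb S}^{p-1}$ that swaps the cap centres. The difference is in the choice of isometry, and it is not cosmetic. The paper reflects across the axis spanned by $\bx_{t}+\bx_{t'}$ (the matrix $R=\frac{(\bx_t+\bx_{t'})(\bx_t+\bx_{t'})^T}{1+\bx_t^T\bx_{t'}}-I_p$); since $\bmu^T(\bx_t+\bx_{t'})=t+t'$ has no fixed sign, proving the needed inequality $\bmu^T R\bx_1\ge\bmu^T\bx_1$ on $C_1\setminus C_2$ costs the paper a separate Lemma 2, established by combining two tangent--normal inequalities through the auxiliary function $\alpha(z)=z/\sqrt{1-z^2}$ and treating $t=-1$ and $t'=1$ as special cases. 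You reflect instead across the hyperplane through the origin orthogonal to $\bn\propto\bx_{t_1}-\bx_{t_2}$; because the $\bxi$ components cancel, $\bmu^T\bn\propto t_1-t_2<0$ exactly, and your analogue of Lemma 2 collapses to the identity $\bmu^T\si(\bx_1)-\bmu^T\bx_1=-2(\bn^T\bx_1)(\bmu^T\bn)$ plus the observation that $\bx_{t_1}^T\bx_1\ge\delta>\bx_{t_2}^T\bx_1$ forces $\bn^T\bx_1>0$ --- no case analysis, no auxiliary function, and the endpoints $t_1=-1$, $t_2=1$ need no special treatment since $\bx_{t_1}\neq\bx_{t_2}$ whenever $t_1\neq t_2$. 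Both maps are orthogonal involutions exchanging $\bx_{t_1}$ and $\bx_{t_2}$ (they are genuinely distinct maps for $p\ge 3$), so both preserve the surface measure and exchange the caps, and your verification $\bx_{t_2}^T\si(\bx_1)=(\si\bx_{t_2})^T\bx_1=\bx_{t_1}^T\bx_1$ correctly gives $\si(C_1)=C_2$. The one point to tighten is strictness: as written you obtain $h(t_1)\le h(t_2)$, whereas the appendix uses the strict inequality downstream to conclude $f(\bx)<f(\bx')$ and hence a unique mode; since $\bmu^T\si(\bx_1)>\bmu^T\bx_1$ strictly on $C_1\setminus C_2$, which has positive measure, strict monotonicity of $g_1$ upgrades your conclusion to $h(t_1)<h(t_2)$ at no extra cost.
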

\begin{proof}
 See section \ref{sec:lemma1proof}.
\end{proof}
According to Lemma \ref{lemma1}, if $\bx' = t'\bmu + \sqrt{1-{t'}^2} \bxi$ with
$-1 \le t < t' \le 1$, then $\Pr\left(\bX_1^T \bx \ge \delta  \right) < \Pr\left(\bX_1^T \bx' \ge \delta  \right)$ for all
$\delta \in (-1,1)$.
This yields directly that $f(\bx)< f(\bx')$ according to \eqref{eq:unipos} and \eqref{eq:uniesp}.
Due to the rotational symmetry about $\bmu$, the convolved density expresses
as $f(\bx)= g(\bmu^T \bx)$ and depends on the only tangent part $t=\bmu^T \bx$. Thus
the inequality $f(\bx)< f(\bx')$  extends to the case where
$\bx' = t'\bmu + \sqrt{1-{t'}^2} \bxi'$ for any $\bxi'\in \bmu^{\bot} \cap S^{p-1}$.
Finally, we obtain that $g$ is an increasing function on $[-1,1]$: for all $\bx,\bx' \in S^{p-1}$,
$f(\bx)= g(\bmu^T \bx)< f(\bx')= g(\bmu^T \bx')$ iff $\bmu^T \bx < \bmu^T \bx'$,
and the maximum is reached for $\bmu^T \bx'=1$, i.e. when $\bx'=\bmu$.
This concludes the proof.
\subsection{Proof of Lemma \ref{lemma1}}
\label{sec:lemma1proof}
Consider the following unit vectors,
\begin{align*}
\bx_t& = t\;\bmu + \sqrt{1-{t\,}^2} \bxi,\\
\bx_{t'}&=t'\bmu + \sqrt{1-{t'}^2} \bxi,
\end{align*}
with $-1 \le t <t'\le 1$ and $\bxi \in \bmu^{\bot} \cap S^{p-1}$.
Introduce the reflection matrix $R\in O(p)$ across
the axis directed by $\bx_t+\bx_{t'}$,  $R= \left[\frac{(\bx_t+\bx_{t'}) (\bx_t+\bx_{t'})^T }{1 +\bx_t^T \bx_t'}-I_{p}\right]$, where $I_p$ is the $p\times p$
identity matrix.

This interchanges $ \bx_{t}$ and $\bx_{t'}$, i.e. $R \bx_{t}= \bx_{t'}$ and $R \bx_{t'}= \bx_{t}$.
Let $\bu_1 \in {\mathbb S}^{p-1}$ be defined as $\bu_1 = R \bx_1$. Since the reflection matrix $R$ satisfies $R=R^T$, it comes that
\begin{align}
\begin{split}
 \bx_{t'}^T \bx_1 &=  (R \bx_{t})^T \bx_1= \bx_{t}^T R \bx_1 = \bx_{t}^T \bu_1,\\
 \bx_{t}^T \bx_1 &=  (R \bx_{t'})^T \bx_1= \bx_{t'}^T R \bx_1 = \bx_{t'}^T \bu_1.
  \end{split}
  \label{eq:decompDom}
\end{align}

Moreover one gets that
\begin{align*}
 h(t) =&
 \int_{\substack{\ \, \bx^T_t \bx_1 \ge \delta \\ \cap\, \bx^T_{t'} \bx_1 \ge \delta}} g_1\left( \bmu^T \bx_1 \right)  d\bx_1 +
 \underbrace{\int_{\substack{\ \, \bx^T_t \bx_1 \ge \delta \\ \cap\, \bx^T_{t'} \bx_1 < \delta}} g_1\left( \bmu^T \bx_1 \right)  d\bx_1}_{I}.
\end{align*}
The reflection matrix $R$ satisfies $R \bx_{t}= \bx_{t'}$,  $|\det{R}|=1$, $R^{-1}=R$ and \eqref{eq:decompDom}.
As a consequence, performing the substitution $\bu_1= R \bx_1$ in the integral denoted as $I$
yields
\begin{align}
I &= \int_{\substack{\ \, \bx^T_{t'} \bu_1 \ge \delta \\ \cap\, \bx^T_t \bu_1 < \delta}} g_1\left( \bmu^T R \bu_1 \right)  d\bu_1.
\label{eq:Ichange}
\end{align}

We need now to use the following result
\begin{lemma}
 \label{lemma2}
For all $\bx_1 \in {\mathbb S}^{p-1}$ such that $\bx_t^T \bx_1 > \bx_{t'}^T \bx_1$,
\begin{align}
 \bmu^T \bu_1 &> \bmu^T \bx_1.
 \label{eq:Ineqmu}
\end{align}
\end{lemma}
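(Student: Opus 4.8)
The plan is to collapse the claimed inequality \eqref{eq:Ineqmu} into a statement about the single vector $R\bmu-\bmu$, and then to show that this vector is a strictly positive multiple of $\bx_t-\bx_{t'}$. Since $\bu_1=R\bx_1$ and the reflection is symmetric ($R=R^T$), I would first rewrite
\begin{align*}
 \bmu^T \bu_1 - \bmu^T \bx_1 = \bmu^T R \bx_1 - \bmu^T \bx_1 = \left( R\bmu - \bmu \right)^T \bx_1 ,
\end{align*}
so that \eqref{eq:Ineqmu} becomes $\left( R\bmu - \bmu \right)^T \bx_1 > 0$, to be deduced from the hypothesis $\bx_t^T\bx_1 > \bx_{t'}^T\bx_1$, i.e. $\left( \bx_t - \bx_{t'} \right)^T \bx_1 > 0$. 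Note that $\bmu$, $\bxi$, $\bx_t$, $\bx_{t'}$ all lie in the $2$-plane $\mathrm{span}(\bmu,\bxi)$, which $R$ preserves, so the whole computation stays in this plane.

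The heart of the argument is an explicit evaluation of $R\bmu$. Writing $s=\sqrt{1-t^2}$ and $s'=\sqrt{1-t'^2}$, one has $\bx_t+\bx_{t'}=(t+t')\bmu+(s+s')\bxi$, $\bx_t^T\bx_{t'}=tt'+ss'$, and $(\bx_t+\bx_{t'})^T\bmu=t+t'$ because $\bxi\perp\bmu$. Substituting into the definition of $R$ gives
\begin{align*}
 R\bmu - \bmu = \frac{(t+t')(\bx_t+\bx_{t'})}{1+\bx_t^T\bx_{t'}} - 2\bmu .
\end{align*}
Collecting the $\bmu$ and $\bxi$ components, the coefficient of $\bmu$ simplifies through $(t+t')^2-2(1+\bx_t^T\bx_{t'})=t^2+t'^2-2-2ss'=-(s+s')^2$, where I use the unit-norm identities $t^2+s^2=t'^2+s'^2=1$. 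This yields the compact form
\begin{align*}
 R\bmu - \bmu = \frac{s+s'}{1+\bx_t^T\bx_{t'}}\left[ -(s+s')\bmu + (t+t')\bxi \right] .
\end{align*}

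The decisive step is then to recognise that the bracketed vector is parallel to $\bx_t-\bx_{t'}=(t-t')\bmu+(s-s')\bxi$. The same two identities give $s^2-s'^2=t'^2-t^2=(t'-t)(t'+t)$, hence $(s+s')(s-s')=(t'-t)(t+t')$, and therefore $-(s+s')\bmu+(t+t')\bxi=\frac{s+s'}{t'-t}\left(\bx_t-\bx_{t'}\right)$. Combining the last two displays,
\begin{align*}
 R\bmu - \bmu = \frac{(s+s')^2}{(1+\bx_t^T\bx_{t'})(t'-t)}\left( \bx_t - \bx_{t'} \right) ,
\end{align*}
and the scalar in front is strictly positive since $t'>t$ and $1+\bx_t^T\bx_{t'}>0$. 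Taking the inner product with $\bx_1$ and invoking the hypothesis $\left(\bx_t-\bx_{t'}\right)^T\bx_1>0$ gives $\left(R\bmu-\bmu\right)^T\bx_1>0$, which is exactly \eqref{eq:Ineqmu}.

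I expect the difficulty to be bookkeeping rather than conceptual: the whole proof hinges on the two coordinate simplifications driven by $t^2+s^2=t'^2+s'^2=1$, which is where the geometry (the common radius of $\bx_t$ and $\bx_{t'}$ about $\bmu$) enters. The only genuine caveat is that the reflection construction requires $\bx_t+\bx_{t'}\neq 0$, i.e. $1+\bx_t^T\bx_{t'}>0$; for $-1\le t<t'\le 1$ with a common $\bxi$ this fails only in the extreme antipodal configuration $t=-1,\ t'=1$, where both numerator and denominator vanish. That boundary case contributes nothing to the integral in Lemma \ref{lemma1} and can be absorbed by a continuity argument, so it does not affect the strict monotonicity of $h$.
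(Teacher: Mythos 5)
Your proof is correct, and the algebra checks out: with $s=\sqrt{1-t^2}$, $s'=\sqrt{1-t'^2}$ one indeed has $(t+t')^2-2(1+\bx_t^T\bx_{t'})=-(s+s')^2$ and $s^2-s'^2=(t'-t)(t'+t)$, so $R\bmu-\bmu$ is a strictly positive multiple of $\bx_t-\bx_{t'}$ for $-1\le t<t'\le 1$ away from the antipodal configuration. But your route is genuinely different from the paper's. The paper never evaluates $R\bmu$: it uses the swap identities \eqref{eq:decompDom} to convert the hypothesis into the two scalar inequalities $\bx_{t'}^T\bu_1>\bx_{t'}^T\bx_1$ and $\bx_t^T\bx_1>\bx_t^T\bu_1$, expands both in the $(\bmu,\bxi)$ basis, and eliminates the $\bxi$-components by combining them with weights $1/\sqrt{1-t^2}$ and $-1/\sqrt{1-t'^2}$, leaving $\left(\alpha(t')-\alpha(t)\right)\left(\bmu^T\bu_1-\bmu^T\bx_1\right)>0$ with $\alpha(z)=z/\sqrt{1-z^2}$ increasing; the cases $t=-1$ and $t'=1$ are handled separately beforehand. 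Your computation buys a more transparent geometric statement --- the reflection displaces $\bmu$ exactly along $\bx_t-\bx_{t'}$ --- which yields an equivalence rather than a one-way implication and dispenses with the case split at $t=\pm1$; the paper's argument avoids the explicit closed form for $R\bmu$ at the cost of the weight-choosing trick and the boundary cases. Both arguments tacitly require $\bx_t+\bx_{t'}\neq 0$ (i.e.\ not $t=-1$ and $t'=1$ simultaneously), since $R$ is otherwise undefined; your closing observation that this single configuration is irrelevant to the integral in Lemma \ref{lemma1} is the right way to dismiss it.
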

\begin{proof}[Proof of Lemma \ref{lemma2}]
When $\bx_t^T \bx_1 > \bx_{t'}^T \bx_1$, eq. \eqref{eq:decompDom} allows us to derive the following inequalities
\begin{align*}
 \bx_{t'}^T \bu_1 & >  \bx_{t'}^T \bx_1,\\
 \bx_{t}^T \bx_1 &  >  \bx_{t}^T \bu_1.
\end{align*}
Using the normal tangent decomposition, theses  inequalities express as
\begin{align}
  t' \bmu^T \bu_1 + \sqrt{1-{t'}^2} \bxi^T \bu_1 & > t' \bmu^T \bx_1 + \sqrt{1-{t'}^2} \bxi^T \bx_1,
  \label{eq:tp}\\
  t \  \bmu^T \bx_1 + \sqrt{1-{t\,}^2} \bxi^T \bx_1 & > t\; \bmu^T \bu_1 + \sqrt{1-{t\,}^2} \bxi^T \bu_1.
  \label{eq:t}
\end{align}
When $t'=1$, or $t=-1$ respectively, it comes directly from  \eqref{eq:tp}, or  \eqref{eq:t} respectively, that
$\bmu^T \bu_1 >  \bmu^T \bx_1$. We can thus assume that $-1<t <t' < 1$.
Multiplying both sides of inequality \eqref{eq:t} and \eqref{eq:tp}
by $\frac{1}{ \sqrt{1-t^2}}>0$ and $\frac{-1}{ \sqrt{1-{t'}^2}}<0$ respectively, and summing the resulting
inequalities yields
\begin{align*}
 \left( \alpha(t')-\alpha(t) \right) \bmu^T \bu_1 >
  \left(\alpha(t')-\alpha(t) \right) \bmu^T \bx_1,
\end{align*}
where $\alpha(z) = \frac{z}{ \sqrt{1-z^2}}$. The function $\alpha$ being increasing on $(-1,1)$,
the factor $\alpha(t')-\alpha(t)$ is positive since  $-1<t <t' < 1$. Thus the required inequality
holds.
\end{proof}
For all $\bu_1$ belonging to the integration domain defined in \eqref{eq:Ichange}, it comes that
$\bx_{t'}^T \bu_1 > \bx_{t}^T \bu_1$ which is equivalent to
$\bx_t^T \bx_1 > \bx_{t'}^T \bx_1$. Thus
$\bmu^T R \bu_1=  \bmu^T \bx_1 < \bmu^T  \bu_1$ according to Lemma \ref{lemma2}. Since $g_1$ is increasing,
it comes that
\begin{align*}
I < \int_{\substack{\ \, \bx^T_{t'} \bu_1 \ge \delta \\ \cap\, \bx^T_t \bu_1 < \delta}} g_1\left( \bmu^T \bu_1 \right)  d\bu_1.
\end{align*}
As a consequence,
\begin{align*}
 h(t) < &
 \int_{\substack{\ \, \bx^T_t \bx_1 \ge \delta \\ \cap\, \bx^T_{t'} \bx_1 \ge \delta}} g_1\left( \bmu^T \bx_1 \right)  d\bx_1 +
\int_{\substack{\ \, \bx^T_{t'} \bu_1 \ge \delta \\ \cap\, \bx^T_t \bu_1 < \delta}} g_1\left( \bmu^T \bu_1 \right)  d\bu_1,\\
& =  \int_{\substack{\ \, \bx^T_{t'} \bz \ge \delta \\ \cap\, \bx^T_{t} \bz \ge \delta}} g_1\left( \bmu^T \bz \right)  d\bz +
\int_{\substack{\ \, \bx^T_{t'} \bz \ge \delta \\ \cap\, \bx^T_{t} \bz < \delta}} g_1\left( \bmu^T \bz \right)  d\bz,\\
&  = \int_{\substack{\bx^T_{t'} \bz \ge \delta}} g_1\left( \bmu^T \bz \right)  d\bz=  h(t'),
\end{align*}
and the inequality holds for all $-1\le t< t' \le 1$.

\bibliographystyle{IEEEbib}
\bibliography{bibli}

\begin{IEEEbiographynophoto}{Nicolas Le Bihan}
Nicolas Le Bihan obtained his B.Sc. degree in physics from the Université de
Bretagne Occidentale in Brest, France, in 1997. He received the
M.Sc. and Ph.D. degrees in signal processing in 1998 and 2001,
respectively, both from Grenoble INP. In 2011, he obtained the
Habilitation degree from Grenoble INP. Since 2002, he has been a
research associate at the Centre National de la Recherche
Scientifique (CNRS) and is working at the Department of Images
and Signals of the GIPSA-Lab (CNRS UMR 5083) in Grenoble,
France. From 2013 to 2015, he was a visiting fellow at the University of Melbourne, holding a Marie Curie International Outgoing Fellowship from the European Union (IOF GeoSToSip 326176), ERA, 7th PCRD.
His research interests include statistical signal processing
on groups, noncommutative algebras and differentiable manifolds
and its applications in polarized wave physics, waves in disordered
media, and geophysics.

\end{IEEEbiographynophoto}

\begin{IEEEbiographynophoto}{Florent Chatelain}
Florent Chatelain received the Eng. degree in
computer sciences and applied mathematics from
ENSIMAG, Grenoble, France, and the M.Sc. degree
in applied mathematics from the University Joseph
Fourier of Grenoble, France, both in June 2004,
and the Ph.D. degree in signal processing from the
National Polytechnic Institute, Toulouse, France, in
2007. He is currently an Assistant Professor at
GIPSA-Lab, University of Grenoble, France. His
research  interests are centered around estimation,
detection, and the analysis of stochastic processes. 
\end{IEEEbiographynophoto}

\begin{IEEEbiographynophoto}{Jonathan Manton}
Professor Jonathan Manton holds a Distinguished Chair at the University of Melbourne with the title Future Generation Professor. He is also an Adjunct Professor in the Mathematical Sciences Institute at the Australian National University. Prof Manton is a Fellow of IEEE and a Fellow of the Australian Mathematical Society.
He received his Bachelor of Science (mathematics) and Bachelor of Engineering (electrical) degrees in 1995 and his Ph.D. degree in 1998, all from the University of Melbourne, Australia. From 1998 to 2004, he was with the Department of Electrical and Electronic Engineering at the University of Melbourne. During that time, he held a Postdoctoral Research Fellowship then subsequently a Queen Elizabeth II Fellowship, both from the Australian Research Council. In 2005 he became a full Professor in the Department of Information Engineering, Research School of Information Sciences and Engineering (RSISE) at the Australian National University. From July 2006 till May 2008, he was on secondment to the Australian Research Council as Executive Director, Mathematics, Information and Communication Sciences. 
Prof Manton's traditional research interests range from pure mathematics (e.g. commutative algebra, algebraic geometry, differential geometry) to engineering (e.g. signal processing, wireless communications, systems theory). More recently, he has become interested in systems biology and systems neuroscience.
\end{IEEEbiographynophoto}

\end{document}